\documentclass[10pt,journal,compsoc]{IEEEtran}

\usepackage[nocompress]{cite}

\usepackage{xcolor}
\usepackage[colorlinks=true,citecolor=blue]{hyperref}
\usepackage{graphicx}
\usepackage{amsmath,amsthm,amssymb}
\usepackage{pbalance}
\usepackage{multirow}
\usepackage{algorithm}
\usepackage{algorithmic}
\usepackage{url}
\usepackage{tikz}
\usepackage{comment}

\usetikzlibrary{arrows.meta, positioning, shapes.geometric, fit, backgrounds, calc}

\floatname{algorithm}{Procedure}

\newtheorem{theorem}{Theorem}
\newtheorem{proposition}[theorem]{Proposition}

\newtheorem{definition}[theorem]{Definition}
\newtheorem{remark}[theorem]{Remark}
\newtheorem{example}[theorem]{Example}

\begin{document}

\title{$p$-adic Bi-Filtrations for Topological Machine Learning on Genomic Sequences}

\author{Tirtharaj~Dash$^{1,*}$
        and~Gunja~Sachdeva$^{2}$
\IEEEcompsocitemizethanks{
\IEEEcompsocthanksitem $^{1}$Department of CS \& IS, BITS Pilani, K K Birla Goa Campus, Zuarinagar, Goa 403726, India. E-mail:~\href{mailto:tirtharaj@goa.bits-pilani.ac.in}{tirtharaj@goa.bits-pilani.ac.in}
\IEEEcompsocthanksitem $^{2}$Department of Mathematics, BITS Pilani, K K Birla Goa Campus, Zuarinagar, Goa 403726, India. E-mail:~\href{mailto:gunjas@goa.bits-pilani.ac.in}{gunjas@goa.bits-pilani.ac.in}
\IEEEcompsocthanksitem $^{*}$Corresponding author
}}

\IEEEtitleabstractindextext{%
\begin{abstract}
We introduce \texttt{pVR}, a topological machine learning framework for alignment-free genomic sequence classification that combines $p$-adic numbers with topological data analysis. 
Each DNA sequence is encoded along two complementary axes: a $p$-adic distance on $k$-mer prefixes, which captures hierarchical positional structure, and a compositional $L_1$ distance on $k$-mer frequencies, which captures local sequence content. 
The two distances jointly parameterise a bi-filtered Vietoris--Rips complex, and per-sequence topological summaries from this bi-filtration serve as features for standard machine learning classifiers. 
We establish theoretical guarantees for the construction: stability under metric perturbations and invariance to the choice of prime, alongside a result that explains why a single $p$-adic axis is topologically uninformative and why the bi-filtration recovers nontrivial homology. 
On twelve genomic benchmarks ($28$ to $500$ sequences, $3$ to $7$ classes), 
\texttt{pVR} outperforms four established alignment-free baselines on three of six
low-sample datasets, with gains of up to $21$ percentage points;
it underperforms only on a SARS-CoV-2 variant benchmark whose point-mutation divergence violates the hierarchical assumption, and all methods saturate in the large-sample regime. 
\texttt{pVR} also outperforms zero-shot frozen embeddings from the 500M-parameter Nucleotide Transformer~v2 by $6.7$ to $11.4$ percentage points on three low-sample benchmarks. 
\texttt{pVR} codebase is publicly available at \url{https://github.com/MAHI-Group/pVR}.
\end{abstract}

\begin{IEEEkeywords}
Topological data analysis, alignment-free methods, genomic classification, $p$-adic numbers, simplicial complexes, persistent homology
\end{IEEEkeywords}}

\maketitle

\IEEEdisplaynontitleabstractindextext
\IEEEpeerreviewmaketitle

\section{Introduction}
\label{sec:introduction}

Comparative genomic analysis is a foundational task in bioinformatics. 
It includes classifying organisms, identifying variants, and reconstructing evolutionary relationships from genetic sequences. Classical alignment-based approaches such as MAFFT and MUSCLE are effective for closely related sequences but scale poorly with sequence length and divergence. Alignment-free methods address these limitations by mapping sequences to fixed-length feature vectors, enabling scalable comparison without explicit alignment~\cite{zielezinski2017alignment}. Among these, $k$-mer frequency methods~\cite{sims2009alignment} compare distributions of short subsequences, the natural vector method (NVM)~\cite{deng2011novel} encodes positional statistics of nucleotides, and MinHash-based tools like Mash~\cite{ondov2016mash} estimate Jaccard similarity via locality-sensitive hashing. 
While these methods are effective, they treat $k$-mers as independent entities and discard the structural relationships that exist among them.

Recently, topological data analysis (TDA) has emerged as a tool for capturing higher-order structures in genomic data. 
Chan et al.~\cite{chan2013topology} applied persistent homology to detect reassortment in viral evolution. Hozumi and Wei~\cite{hozumi2024revealing} introduced $k$-mer topology using persistent Laplacians, and Suwayyid et al.~\cite{suwayyid2025cakl} proposed commutative algebra $k$-mer learning (CAKL) using persistent Stanley--Reisner invariants. Both demonstrate that topological and algebraic invariants capture information invisible to frequency-based approaches.

A separate line of work has used $p$-adic distances to model biological sequences. Dragovich and Dragovich~\cite{dragovich2010p} showed that encoding nucleotides as digits in $\mathbb{Z}_5$ yields a metric space where codon degeneracy corresponds to $p$-adic proximity. Dragovich et al.~\cite{dragovich2021p} surveyed broader connections between $p$-adic analysis, ultrametric spaces, and models of genetic codes. 
Unlike ordinary metrics, $p$-adic distances are ultrametric, satisfying the strong triangle inequality $d_p(x,z) \leq \max(d_p(x,y), d_p(y,z))$. 
Finite ultrametric spaces are exactly the leaf sets of rooted weighted trees~\cite{carlsson2010characterization}, which is the precise sense in which the $p$-adic distance imposes a hierarchical, tree-like structure on $k$-mers. This structure may align with evolutionary divergence when sequence variation accumulates along tree-like lineages~\cite{semple2003phylogenetics}.

Despite these parallel developments, to the best of our knowledge, no method has combined $p$-adic encodings with simplicial persistent homology for genomic sequence classification. 
In this work, we bring these two ideas together for genomic sequence classification.
We first provide self-contained introductions to $p$-adic distance and Vietoris--Rips complexes in Section~\ref{sec:prelim} for readers without specialised backgrounds. 
Our contributions are both theoretical and empirical. The first three results are
structural: in the idealised setting of a strictly ultrametric distance, they explain
why a single $p$-adic axis cannot capture higher-order topology while pairing it with a
compositional axis can. Since the implemented distance $D_p$ is only approximately
ultrametric (Remark~\ref{rem:strict_vs_empirical}), these results motivate \texttt{pVR}
rather than describe it exactly, and we confirm the predicted behaviour empirically
(Section~\ref{sec:results}). The fourth contribution implements this design for genomic
classification. 
We state these contributions precisely below.
\begin{enumerate}
\item We prove that Vietoris--Rips complexes built from finite ultrametric spaces have
trivial homology in all positive dimensions (Theorem~\ref{thm:trivial}), so a
single strictly ultrametric axis cannot capture higher-order structure. This motivates a
second filtration axis.
\item We introduce a bi-filtered Vietoris--Rips complex using both $p$-adic and
compositional $L_1$ distances and prove, on a strictly ultrametric configuration, that it
recovers nontrivial homology absent from either filtration alone
(Proposition~\ref{prop:nontrivial}).
\item We prove that the resulting bi-persistence module is stable under metric perturbations (Proposition~\ref{prop:stability}) and invariant to the choice of prime (Proposition~\ref{prop:prime_invariance}), so the features are robust and free of a prime hyperparameter.
\item We implement these ideas as \texttt{pVR}, an alignment-free framework for genomic sequence classification, and evaluate it on twelve genomic benchmarks spanning two scale regimes. In the low-sample regime, 
pVR outperforms four established alignment-free baselines on several datasets, with gains
of up to $21$ percentage points; in the large-sample regime it remains competitive, where most methods approach saturated performance.
\end{enumerate}

\section{Related Work}
\label{sec:related}
\subsection{Alignment-Free Sequence Comparison}
Most alignment-free methods reduce a sequence to a fixed-length vector and compare these vectors directly, differing mainly in what they put in the vector. The Feature Frequency Profile (FFP) framework~\cite{sims2009alignment} compares $k$-mer count distributions under Jensen--Shannon divergence. The natural vector method (NVM)~\cite{deng2011novel} instead summarises each nucleotide type by its count, mean position, and normalised second central moment. 
Sketch-based tools such as Mash~\cite{ondov2016mash} avoid explicit vectors altogether, using MinHash to approximate the Jaccard similarity between $k$-mer sets; see~\cite{zielezinski2017alignment} for a broader survey. Despite their differences, these methods all treat $k$-mers as independent features and make no use of the hierarchical arithmetic structure of the genetic encoding. 
A more recent line of work instead learns representations directly: 
Large pretrained models such as DNABERT~\cite{ji2021dnabert}, Nucleotide Transformer~\cite{dalla2025nucleotide}, and GROVER~\cite{sanabria2024dna} apply transformer architectures to large collections of DNA sequences. Notably, the representations these models learn remain largely compositional; GROVER's token embeddings, for instance, primarily encode $k$-mer frequency, sequence content, and length~\cite{sanabria2024dna}.

\subsection{Topological Data Analysis in Genomics}
TDA was applied in genomics through the work of Chan et al.~\cite{chan2013topology}, who used Vietoris--Rips persistent homology to detect reassortment events in influenza and HIV; Camara et al.~\cite{camara2017topological} give a broader overview of topological methods in evolutionary biology. Closest to our setting are two recent methods: $k$-mer topology~\cite{hozumi2024revealing}, which applies persistent Laplacians to $k$-mer frequency data, and CAKL~\cite{suwayyid2025cakl}, which extracts algebraic invariants from $k$-mer complexes via persistent Stanley--Reisner theory. Both build simplicial complexes from sequences and read off topological or algebraic features for classification. Topological summaries have also been used to study recombination, evolutionary structure, and sequence organisation more generally~\cite{rabadan2019tdabook}. 
Our construction differs in where the complex comes from. The ultrametric that drives it is a biologically motivated $p$-adic encoding rather than a frequency or sketch distance, 
and we combine it with a compositional metric in a single bi-filtered complex, so the resulting topology reflects how the two distances interact.

\subsection{$p$-adic Models in Biology}
The use of $p$-adic numbers in biology goes back to Dragovich and Dragovich~\cite{dragovich2010p}, who mapped nucleotides to $5$-adic digits and showed that the degeneracy of the genetic code corresponds to $p$-adic proximity of codons. Dragovich et al.~\cite{dragovich2021p} later extended this to protein folding as ultrametric energy landscapes and gene expression as $2$-adic dynamical systems, with an ultrametric similarity for DNA, RNA, and protein sequences sketched in~\cite{dragovich2017ultrametrics}. The idea has since moved into machine learning, in $p$-adic clustering of single-cell RNA-seq data~\cite{sharma2025p}, van der Put neural networks operating natively in $\mathbb{Z}_p$~\cite{n2025v}, and theoretical accounts of $p$-adic classification and representation learning~\cite{martins2025learning}. We find the combination of $p$-adic encodings with simplicial persistent homology to be a largely underexplored direction, particularly for alignment-free genomic sequence classification, and we explore it in this work.

\section{Theoretical Framework}
\label{sec:math}

\subsection{Preliminaries}
\label{sec:prelim}

We first provide the mathematical machinery underlying \texttt{pVR}: the $p$-adic distance, its use in encoding $k$-mers, the Vietoris--Rips complex, and the two sequence-level distances that drive the bi-filtration. We attempt to illustrate each object with worked examples on biological sequence data. 
Throughout the paper we work with a finite set $\mathcal{S} = \{s_1, \ldots, s_N\}$ of DNA sequences over the alphabet $\Sigma = \{A, C, G, T\}$. Readers familiar with $p$-adic numbers and persistent homology may skip to the theoretical results in Section~\ref{sec:theo_results}; general introductions to persistent homology and topological data analysis can be found in~\cite{ghrist2008barcodes}.

\smallskip
\noindent\textbf{$p$-adic distance.}\enspace The $p$-adic distance is an alternative notion of closeness between integers based on divisibility rather than absolute difference. Fix a prime $p$. For any nonzero integer $n$, write $n = p^v \cdot m$ where $\gcd(m, p) = 1$. The exponent $v$ is the \emph{$p$-adic valuation}, denoted $v_p(n)$. The $p$-adic distance between two integers $a, b$ is:
\begin{equation}\label{eq:padic_def}
  d_p(a, b) = p^{-v_p(a - b)},
\end{equation}
with $d_p(a, a) = 0$. Two integers are $p$-adically close when their difference is highly divisible by $p$.

\begin{example}[$5$-adic distance between integers]\label{ex:padic_int}
Let $p = 5$. Then $d_5(7, 3) = 5^0 = 1$ since $7 - 3 = 4$ is not divisible by $5$, while $d_5(7, 132) = 5^{-3} = 0.008$ since $7 - 132 = -125 = -5^3$. Counterintuitively, $132$ is $5$-adically much closer to $7$ than $3$ is: the $p$-adic distance privileges divisibility over magnitude.
\end{example}

\smallskip
\noindent\textbf{Ultrametric property.}\enspace The $p$-adic distance satisfies the standard metric axioms and a strengthened form of the triangle inequality,
\begin{equation}
  d_p(a, c) \leq \max\bigl(d_p(a, b), d_p(b, c)\bigr),
\end{equation}
which is called the \textit{ultrametric inequality}. Geometrically, every triangle in $p$-adic space is isosceles, with the two longest sides having equal length. This forces a hierarchical, tree-like structure on any finite subset, and this constraint underlies Theorem~\ref{thm:trivial} below.

\smallskip
\noindent\textbf{The space $\mathbb{Q}_p$.}\enspace The $p$-adic numbers $\mathbb{Q}_p$ are the completion of $\mathbb{Q}$ under $d_p$, with the $p$-adic integers $\mathbb{Z}_p \subset \mathbb{Q}_p$ given by elements expressible as $\sum_{i \geq 0} a_i \, p^i$, $a_i \in \{0, 1, \ldots, p-1\}$. Finite truncations of these series correspond exactly to non-negative integers written in base $p$. This base-$p$ representation is the basis for the $k$-mer encoding introduced next.

\smallskip
\noindent\textbf{$p$-adic encoding of $k$-mers.}\enspace Following~\cite{dragovich2010p}, we fix a prime $p \geq |\Sigma| + 1 = 5$ and define the digit assignment $\phi: \Sigma \to \{1, 2, 3, 4\}$ by $\phi(A) = 1$, $\phi(C) = 2$, $\phi(G) = 3$, $\phi(T) = 4$. A $k$-mer $w = w_0 w_1 \cdots w_{k-1}$ is encoded as the $p$-adic integer
\begin{equation}\label{eq:encoding}
  \phi_p(w) = \sum_{i=0}^{k-1} \phi(w_i) \cdot p^i,
\end{equation}
treating its nucleotides as digits in base $p$ with the first nucleotide in the lowest-order position. Under $d_p$, two $k$-mers are close exactly when they share a long common prefix, and differences at later positions are absorbed by higher powers of $p$. We show this in an example below.

\begin{example}[$5$-adic distance between $k$-mers]\label{ex:padic_kmer}
With $p = 5$ and the encoding $A \mapsto 1$, $C \mapsto 2$, $G \mapsto 3$, $T \mapsto 4$, consider the $4$-mer $w = \mathtt{ACGT}$. We compute $\phi_5(w) = 1 + 2 \cdot 5 + 3 \cdot 25 + 4 \cdot 125 = 586$. A mutation at the last position, $w' = \mathtt{ACGA}$, gives $\phi_5(w') = 211$ and $\phi_5(w) - \phi_5(w') = 375 = 3 \cdot 5^3$, so $v_5 = 3$ and $d_5(w, w') = 5^{-3} = 0.008$. A mutation at the first position, $w'' = \mathtt{TCGT}$, gives $\phi_5(w'') = 589$ and $\phi_5(w) - \phi_5(w'') = -3$, so $v_5 = 0$ and $d_5(w, w'') = 1$. Thus $p$-adic distance treats positional information asymmetrically: mutations at conserved (early) positions are weighted exponentially more than mutations at variable (late) positions.
\end{example}

\smallskip
\noindent\textbf{Vietoris--Rips complexes.}\enspace A \emph{simplex} generalises a triangle. A $0$-simplex is a point, a $1$-simplex is an edge, a $2$-simplex is a filled triangle, a $3$-simplex is a filled tetrahedron, and so on. A \emph{$k$-simplex} on $k+1$ points is a single combinatorial object recording that those $k+1$ points are mutually connected. A \emph{simplicial complex} $K$ on a finite vertex set $V$ is a collection of simplices satisfying the closure property: if a simplex $\sigma \in K$, every nonempty subset of $\sigma$ is also in $K$. Hence if a triangle $\{a, b, c\} \in K$, then all three of its edges and vertices must also be in $K$. Given a finite metric space $(X, d)$ and a threshold $r \geq 0$, the \emph{Vietoris--Rips complex} $\mathrm{VR}(X; r)$ is the simplicial complex whose simplices are subsets of $X$ in which every pair of points is within distance $r$,
\begin{equation}\label{eq:vr_def}
  \mathrm{VR}(X; r) = \bigl\{ \sigma \subseteq X :
  d(x, y) \leq r \ \forall\, x, y \in \sigma \bigr\}.
\end{equation}
This is a \emph{flag complex}: $\sigma$ is a simplex whenever all its vertex pairs are edges. Because increasing $r$ only adds simplices, the complexes form a \emph{filtration} $\mathrm{VR}(X; r_1) \subseteq \mathrm{VR}(X; r_2) \subseteq \cdots$ for $r_1 \leq r_2 \leq \cdots$.

\begin{example}[Building a small Vietoris--Rips complex]\label{ex:vr_small}
Let $X = \{a, b, c, d\}$ with pairwise distances $d(a,b) = 0.2$, $d(a,c) = 0.5$, $d(a,d) = 0.7$, $d(b,c) = 0.3$, $d(b,d) = 0.6$, and $d(c,d) = 0.4$. Figure~\ref{fig:vr_evolution} illustrates the filtration. At $r = 0.1$ only the four vertices are present and no edges are drawn. At $r = 0.35$ edges $\{a,b\}$ and $\{b,c\}$ are present, but the triple $\{a,b,c\}$ is not yet a triangle because $d(a,c) = 0.5 > 0.35$. At $r = 0.5$ the edges $\{a,b\}, \{b,c\}, \{a,c\}, \{c,d\}$ are present, and the triangle $\{a,b,c\}$ now appears. The complex grows from disconnected points through a graph-like intermediate stage into a fully connected high-dimensional simplex.

\begin{figure}[!htb]
\centering
\begin{tikzpicture}[scale=0.8, every node/.style={font=\footnotesize}]

\def\pa{(0,1.6)}
\def\pb{(1.4,1.4)}
\def\pc{(0.7,0.4)}
\def\pd{(2.0,0.6)}

\foreach \shift/\r/\label in {0/0.1/{$r{=}0.1$}, 4/0.35/{$r{=}0.35$}, 8/0.5/{$r{=}0.5$}}{
  \begin{scope}[xshift=\shift cm]
    \node at (1.3,2.3) {\label};
    \fill[blue!40!black] \pa circle (1.5pt) node[above left] {$a$};
    \fill[blue!40!black] \pb circle (1.5pt) node[above right] {$b$};
    \fill[blue!40!black] \pc circle (1.5pt) node[below] {$c$};
    \fill[blue!40!black] \pd circle (1.5pt) node[right] {$d$};
  \end{scope}
}

\begin{scope}[xshift=4cm]
  \draw[blue!50!black, thick] \pa -- \pb;
  \draw[blue!50!black, thick] \pb -- \pc;
\end{scope}

\begin{scope}[xshift=8cm]
  \fill[blue!20, opacity=0.7] \pa -- \pb -- \pc -- cycle;
  \draw[blue!50!black, thick] \pa -- \pb;
  \draw[blue!50!black, thick] \pb -- \pc;
  \draw[blue!50!black, thick] \pa -- \pc;
  \draw[blue!50!black, thick] \pc -- \pd;
\end{scope}

\node at (1.3,-0.4) {\footnotesize $\beta_0{=}4,\ \beta_1{=}0$};
\node at (5.3,-0.4) {\footnotesize $\beta_0{=}2,\ \beta_1{=}0$};
\node at (9.3,-0.4) {\footnotesize $\beta_0{=}1,\ \beta_1{=}0$};

\end{tikzpicture}
\caption{Three snapshots of a Vietoris--Rips filtration on four points $\{a,b,c,d\}$ as the threshold $r$ increases. At $r = 0.1$ no edges are present and the complex consists of four isolated vertices ($\beta_0 = 4$). At $r = 0.35$ two edges have appeared but no triangle, so the complex is a path graph ($\beta_0 = 2$). At $r = 0.5$ the triangle $\{a,b,c\}$ has formed and the edge $\{c,d\}$ connects the fourth point ($\beta_0 = 1$).}
\label{fig:vr_evolution}
\end{figure}

\end{example}

\smallskip
\noindent\textbf{Betti numbers.}\enspace Persistent homology tracks how the shape of $\mathrm{VR}(X; r)$ evolves with $r$ via integer invariants called Betti numbers $\beta_0, \beta_1, \beta_2, \ldots$. Informally, $\beta_0$ counts connected components, $\beta_1$ counts independent $1$-dimensional loops (cycles that are not boundaries of filled triangles), and $\beta_2$ counts independent enclosed voids, and so on. Tracking how these numbers change as $r$ varies is the main object of study in persistent homology~\cite{chan2013topology,edelsbrunner2002topological,zomorodian2004computing}. Applied to sequence data, each sequence $s_i \in \mathcal{S}$ is a vertex, edges encode pairwise similarity below threshold $r$, and higher-dimensional simplices encode groups of mutually-similar sequences; persistent homology then extracts how these groups merge, form loops, and dissolve as we relax the threshold, capturing information that no single distance threshold could reveal. The novelty of our work is to use \emph{two} distances simultaneously, producing the bi-filtration $\mathrm{VR}(\epsilon_p, \epsilon_c)$ defined below, from which we extract per-sequence topological features that combine hierarchical and compositional information for downstream classification.

\smallskip
\noindent\textbf{Sequence-level distances.}\enspace We now lift the $p$-adic distance on $k$-mers, and the standard compositional distance on $k$-mer frequency vectors, to distances on the sequence set $\mathcal{S}$; these two sequence-level distances furnish the two axes of the bi-filtration. At scale $j$ ($j = 1, \ldots, J$ where $J = \min(k, 3)$), we group $k$-mers by their residue $\phi_p(w) \bmod p^j$ and form a normalised frequency vector $h_j^{(s)}$ over $p^j$ bins for each sequence $s \in \mathcal{S}$. Scale $j$ groups $k$-mers sharing the same first $j$ characters, providing a hierarchy from coarse ($j=1$, $p$ bins) to fine ($j=J$, $p^J$ bins). The sequence-level $p$-adic distance is a weighted $L_1$ distance,
\begin{equation}\label{eq:dp}
  D_p(s_i, s_\ell) = \frac{\sum_{j=1}^{J} j \cdot \|h_j^{(s_i)} - h_j^{(s_\ell)}\|_1}{\sum_{j=1}^{J} j}.
\end{equation}
Each scale-$j$ term $\|h_j^{(s_i)} - h_j^{(s_\ell)}\|_1$ is an $L_1$ distance on histograms, hence a metric; a positive weighted sum of metrics is a metric, so $D_p$ is a metric. From an empirical perspective, the cap at $j=3$ keeps the histogram dimension ($p + p^2 + p^3 = 155$ for $p = 5$) comparable to the $4^k = 256$ compositional features at $k = 4$, while avoiding the sparse-bin regime in which $L_1$ distance approximates discrete set-difference rather than smooth statistics.

The compositional distance is simpler. 
For each sequence $s \in \mathcal{S}$, let $f^{(s)} \in \mathbb{R}^{|\Sigma|^k}$ be the vector of normalised $k$-mer frequencies, with components $f^{(s)}(w) = \mathrm{count}(w; s) / m_s$, where $\mathrm{count}(w; s)$ is the number of occurrences of $k$-mer $w$ in $s$ and $m_s$ is the total number of overlapping $k$-mers extracted from $s$. The \emph{compositional distance} between two sequences is the $L_1$ distance between these frequency vectors,
\begin{equation}\label{eq:dc}
  D_c(s_i, s_\ell) = \| f^{(s_i)} - f^{(s_\ell)} \|_1.
\end{equation}
$D_p$ captures hierarchical prefix structure across scales, while $D_c$ captures local compositional content; the two are complementary by design.

\subsection{Theoretical Results}
\label{sec:theo_results}

We now provide the results underlying the construction of \texttt{pVR}. We first show why a single axis is not sufficient (Theorem~\ref{thm:trivial}), then characterise the bi-filtration that follows (Propositions~\ref{prop:nontrivial}--\ref{prop:prime_invariance}).

\smallskip
\noindent\textbf{Triviality of ultrametric VR complexes.}\enspace We begin by recalling the definition of an ultrametric space.

\begin{definition}[Ultrametric space]\label{def:ultrametric}
A metric space $(X, d)$ is \emph{ultrametric} if for all $x, y, z \in X$, $d(x, z) \leq \max(d(x, y), d(y, z))$.
\end{definition}

The following theorem reflects the collapse of higher-order topology in Vietoris--Rips filtrations constructed from ultrametric spaces, which are closely related to hierarchical clustering structures~\cite{carlsson2010characterization}.

\begin{theorem}[Trivial higher homology]
\label{thm:trivial}
Let $(X, d)$ be a finite ultrametric space. For every $r \geq 0$, $\mathrm{VR}(X; r)$ is homotopy equivalent to a discrete set indexed by its connected components. In particular, $H_k(\mathrm{VR}(X; r)) = 0$ for all $k \geq 1$, and $\beta_0(\mathrm{VR}(X; r))$ equals the number of clusters in the single-linkage dendrogram of $(X, d)$ cut at height $r$.
\end{theorem}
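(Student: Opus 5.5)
The plan is to show that in an ultrametric space, the relation ``$d(x,y)\le r$'' is an equivalence relation on $X$. The Vietoris--Rips complex is then a disjoint union of full simplices, one per equivalence class, and each simplex is contractible.

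\textbf{Step 1: transitivity.} Fix $r \ge 0$ and define $x \sim_r y$ iff $d(x,y) \le r$. Reflexivity and symmetry are immediate from the metric axioms. For transitivity, suppose $d(x,y)\le r$ and $d(y,z)\le r$. The ultrametric inequality of Definition~\ref{def:ultrametric} gives $d(x,z) \le \max(d(x,y),d(y,z)) \le r$. This is the only point where the ultrametric hypothesis enters, and it is exactly what fails for general metrics (compare the missing edge $\{a,c\}$ at $r=0.35$ in Example~\ref{ex:vr_small}).

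\textbf{Step 2: the complex is a disjoint union of full simplices.} Let $C_1,\dots,C_m$ be the equivalence classes of $\sim_r$. By definition~\eqref{eq:vr_def}, a subset $\sigma \subseteq X$ is a simplex iff all pairs in $\sigma$ are $\sim_r$-related. That holds iff $\sigma$ lies inside a single class, since any two points in the same class are related and no two points in different classes are. Hence $\mathrm{VR}(X;r)$ is the disjoint union of the full simplices $\Delta^{C_i}$ on the vertex sets $C_i$. In particular its connected components are exactly the $\Delta^{C_i}$.

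\textbf{Step 3: homotopy type and homology.} Each full simplex is contractible, for instance by a straight-line homotopy to a vertex in its geometric realization. So $\mathrm{VR}(X;r)$ is homotopy equivalent to the discrete set $\{C_1,\dots,C_m\}$, indexed by its connected components. Homology is additive over disjoint unions and homotopy invariant, so $H_k(\mathrm{VR}(X;r)) = \bigoplus_i H_k(\mathrm{pt}) = 0$ for $k\ge 1$, and $\beta_0 = m$.

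\textbf{Step 4: identification with single linkage.} Cutting the single-linkage dendrogram at height $r$ yields the connected components of the graph $G_r$ with edges $\{x,y\}$ for $d(x,y)\le r$. Equivalently, $x$ and $y$ lie in the same cluster iff they are joined by a chain $x=x_0,\dots,x_n=y$ with consecutive distances at most $r$. Applying Step 1 inductively along such a chain gives $d(x,y)\le r$. So the components of $G_r$ are precisely the classes $C_i$, and $\beta_0$ equals the number of clusters.

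No step is a genuine obstacle. The only care needed is in Step 4, where the single-linkage convention must match: we use the closed inequality $\le r$, consistent with~\eqref{eq:vr_def}. Together with the chain argument, this convention makes the equality of cluster counts exact.
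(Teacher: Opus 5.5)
Your proof is correct and is essentially the paper's argument: the paper inducts along a path in the $1$-skeleton to show that any two points in a component are within distance $r$, which is your transitivity of $\sim_r$ applied along a chain, and then concludes that each component is a full, contractible simplex. Your Step 4 also spells out the identification of $\beta_0$ with the single-linkage cluster count, which the paper leaves implicit.
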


\begin{proof}
Let $C$ be a connected component of the $1$-skeleton of $\mathrm{VR}(X; r)$. Take any $x, z \in C$ connected by a path $x = v_0, v_1, \ldots, v_m = z$ with $d(v_{i-1}, v_i) \leq r$. We prove $d(x, z) \leq r$ by induction on $m$. The base case $m = 1$ is immediate. For $m \geq 2$, the inductive hypothesis gives $d(x, v_{m-1}) \leq r$, and $d(v_{m-1}, z) \leq r$ by assumption. By the ultrametric inequality, $d(x, z) \leq \max(d(x, v_{m-1}), d(v_{m-1}, z)) \leq r$. Hence $\{x, z\}$ is an edge, so $C$ is a complete graph. Since the VR complex is a flag complex, the subcomplex on $C$ is the full simplex $\Delta^{|C|-1}$, which is contractible. Thus $\mathrm{VR}(X; r)$ is a disjoint union of contractible simplices.
\end{proof}

\begin{remark}
\label{rem:strict_vs_empirical}
Theorem~\ref{thm:trivial} implies that applying standard single-parameter persistent homology to any finite ultrametric space, for instance, a set of $k$-mers under the $p$-adic distance $d_p$ of Eq.~\ref{eq:padic_def} is uninformative beyond $\beta_0$. This is the fundamental obstacle motivating the bi-filtration. We note one technical caveat. The sequence-level distance $D_p$ in Eq.~\ref{eq:dp} is a weighted $L_1$ aggregate of histograms over $p$-adic residues; it inherits hierarchical prefix structure from $d_p$ but satisfies only the ordinary triangle inequality, not strict ultrametricity. As a concrete example, consider the sequences $a = \texttt{A}^L, \qquad b = \texttt{A}^{L/2}\texttt{C}^{L/2}, \qquad c = \texttt{C}^L$, with $L \gg k$, so that boundary $k$-mers in $b$ are negligible. The $k$-mer distributions of $a$ and $c$ are each concentrated on a single $k$-mer, while that of $b$ is asymptotically split equally between the two. Consequently, in the limit $L \to \infty$, at every scale $j$, $\|h_j^{(a)} - h_j^{(b)}\|_1 \to 1, \qquad \|h_j^{(b)} - h_j^{(c)}\|_1 \to 1$, while $\|h_j^{(a)} - h_j^{(c)}\|_1 \to 2$. Substituting into Eq.~\eqref{eq:dp} therefore gives, in the same limit, $D_p(a,b) \to 1, \qquad D_p(b,c) \to 1, \qquad D_p(a,c) \to 2$, which violates the ultrametric inequality $D_p(a,c) \le \max(D_p(a,b), D_p(b,c))$. Hence $D_p$ is generally not ultrametric. Theorem~\ref{thm:trivial} therefore does not formally apply to the implemented sequence-level distance $D_p$; it instead motivates the bi-filtration as a structural design principle for sequence data, while the corresponding topological collapse of $D_p$-only complexes is observed empirically (Section~\ref{sec:results}) rather than as a direct corollary.
\end{remark}

\smallskip
\noindent\textbf{Nontrivial topology from the bi-filtration.}\enspace We now define the bi-filtered complex used by \texttt{pVR}.

\begin{definition}[Bi-filtered VR complex]
\label{def:bifilt}
Let $X$ be a finite set equipped with two metrics, $d_p$ (a hierarchical or ultrametric distance) and $d_c$ (a compositional distance). The bi-filtered VR complex at thresholds $(\epsilon_p, \epsilon_c)$ is
\[
  \mathrm{VR}(\epsilon_p, \epsilon_c) = \{\sigma \subseteq X : 
  d_p(x,y) \le \epsilon_p \text{ and } d_c(x,y) \le \epsilon_c 
  \;\forall\, x, y \in \sigma\}.
\]
\end{definition}

The \texttt{pVR} pipeline (Section~\ref{sec:methods}) applies Definition~\ref{def:bifilt} with $X = \mathcal{S}$, $d_p = D_p$ (Eq.~\ref{eq:dp}), and $d_c = D_c$ (Eq.~\ref{eq:dc}). The substitution $d_c = D_c$ is rigorous since both are genuine metrics. The substitution $d_p = D_p$ is approximate: $D_p$ inherits hierarchical structure from the $p$-adic encoding of $k$-mers but is not strictly ultrametric (see Remark~\ref{rem:strict_vs_empirical}). The existence-of-cycle result below (Proposition~\ref{prop:nontrivial}) is established on a strictly ultrametric example to demonstrate that the bi-filtration can recover nontrivial topology in principle; its empirical realisation on real sequence data is documented in Section~\ref{sec:results}.

\begin{proposition}[Nontrivial homology in the bi-filtration]
\label{prop:nontrivial}
Let $(X, d_p, d_c)$ be a finite set with ultrametric $d_p$ and metric $d_c$. There exists a configuration for which $H_1(\mathrm{VR}(\epsilon_p, \epsilon_c)) \neq 0$.
\end{proposition}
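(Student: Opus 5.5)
The proof is an explicit construction: a four-point configuration whose bi-filtered complex at one threshold pair is exactly the boundary of a square, i.e.\ a $4$-cycle with no diagonals and no triangles. Take $X = \{x_1, x_2, x_3, x_4\}$ and let $d_p$ be the discrete metric, $d_p(x,y) = 1$ for all $x \neq y$. This is ultrametric, since for distinct points $\max(d_p(x,y), d_p(y,z)) = 1 \ge d_p(x,z)$, and the cases with repeated points are trivial. Choose $\epsilon_p = 1$, so the $p$-adic constraint admits every simplex. This matches Theorem~\ref{thm:trivial}: alone, $\mathrm{VR}_{d_p}(X;1)$ is the full $3$-simplex and is contractible.

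For $d_c$, place the points at the four corners of a unit square in $\mathbb{R}^2$ with the Euclidean metric, ordered cyclically. Then
\[
d_c(x_i, x_{i+1}) = 1 \ \ (\text{indices mod } 4), \qquad d_c(x_1,x_3) = d_c(x_2,x_4) = \sqrt{2},
\]
which is automatically a metric. Equivalently, use the graph metric of the $4$-cycle, with side length $1$ and diagonal length $2$.

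Set $\epsilon_c = 1$. A pair of vertices is an edge of $\mathrm{VR}(1,1)$ exactly when both constraints hold. Every pair passes the $d_p$ constraint, so the edges are the four sides and the two diagonals are excluded. Definition~\ref{def:bifilt} is a flag condition, so every higher simplex needs all of its pairwise edges. Any three of the four points include a diagonal pair, so there are no $2$-simplices and no $3$-simplex. The complex is therefore the $4$-cycle graph $C_4$, which is homeomorphic to $S^1$, and $H_1 \cong \mathbb{Z}$ (or the field of coefficients) is nonzero. This can also be checked directly: there are $4$ vertices and $4$ edges, the graph is connected, so $\beta_1 = 4 - 4 + 1 = 1$, and with no $2$-simplices no cycle is a boundary.

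There is no real obstacle; the only care needed is in checking that each axis is legitimate and trivial on its own. For the ultrametricity of $d_p$, the discrete metric works, but a non-degenerate choice may read better. One option is two $p$-adic clusters, $\{x_1,x_2\}$ and $\{x_3,x_4\}$, with within-cluster distance $p^{-1}$ and cross distance $1$. In that case $\epsilon_p = 1$ still admits all pairs and the argument is unchanged.

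To support the remark that the loop is absent from either filtration alone, I would add two observations. First, the single $d_p$ filtration has trivial $H_1$ at every scale by Theorem~\ref{thm:trivial}. Second, the $d_c$ square with $\epsilon_p$ unrestricted is exactly the bi-filtration slice computed above. So the point of the construction is that the compositional axis breaks the cliques that ultrametricity forces. A slightly stronger version, with $H_1$ nonzero in the bi-filtration but zero on the $d_c$ axis alone, would need a configuration where $d_p$ removes the diagonals. For example, take $d_p$ with clusters $\{x_1,x_3\}$ and $\{x_2,x_4\}$, and choose $\epsilon_p$ small. I would mention this only if needed, since the stated proposition requires just existence.
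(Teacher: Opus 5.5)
Your construction proves the proposition as literally stated. With $d_p$ discrete and $\epsilon_p = 1$, the $p$-adic constraint is vacuous. The unit-square $d_c$ at $\epsilon_c = 1$ keeps exactly the four sides, and the flag complex is $C_4$ with $\beta_1 = 1$.

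The paper also produces a $4$-cycle, but it uses five points and proves more. Its $d_p$ has clusters $\{a,b\}$ and $\{c,d\}$ (internal distance $1/5$, cross distance $1$), plus a point $e$ at $d_p$-distance $25$ from all others. Its $d_c$ is $0.5$ on $\{a,b\}$ and $\{c,d\}$ and $0.3$ on every other pair. At $(\epsilon_p,\epsilon_c)=(1,0.4)$, $d_c$ deletes the two ``diagonals'' $\{a,b\}$, $\{c,d\}$ of the square $a$--$c$--$b$--$d$, and $d_p$ deletes every edge at $e$. Because $e$ cones off the square in $\mathrm{VR}_{d_c}(X;0.4)$, $H_1$ vanishes in both single-axis complexes at these thresholds, so the cycle exists only because of the intersection.

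Your example switches the $p$-adic axis off, so the loop is already present in $\mathrm{VR}_{d_c}(X;1)$, as you note yourself. It establishes existence; indeed any $d_c$ with a Rips cycle, taken with $\epsilon_p \ge \max d_p$, would do. It does not give the ``absent from either filtration alone'' conclusion that the contributions list and Figure~\ref{fig:bifilt_evolution} attach to this proposition.

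The upgrade you sketch for that stronger version fails. With $d_p$-clusters $\{x_1,x_3\}$, $\{x_2,x_4\}$ and small $\epsilon_p$, the only pairs passing the $p$-adic test are the two diagonals. Whatever $d_c$ you pair this with, the bi-filtered complex has at most two disjoint edges and no cycle; with your square $d_c$ it has no edges at all.

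More fundamentally, no ultrametric can delete the diagonals while keeping the sides. The inequality $d_p(x_1,x_3) \le \max(d_p(x_1,x_2), d_p(x_2,x_3))$, and its analogue for $\{x_2,x_4\}$, is exactly the clique argument of Theorem~\ref{thm:trivial}. On four vertices, a flag complex has $H_1 \neq 0$ only when its $1$-skeleton is exactly $C_4$, and the $d_c$-only graph contains the bi-filtered one. So the stronger property cannot be achieved on four points at all.

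The clean way to see what is needed: for an ultrametric, $d_p(x,y)\le\epsilon_p$ is an equivalence relation. Hence $\mathrm{VR}(\epsilon_p,\epsilon_c)$ is the disjoint union of the complexes $\mathrm{VR}_{d_c}(B;\epsilon_c)$ over the closed $\epsilon_p$-balls $B$, each a full subcomplex of $\mathrm{VR}_{d_c}(X;\epsilon_c)$. A bi-filtered cycle can therefore die on the $d_c$ axis only if every filling uses points outside its ball. That is precisely the role of the paper's fifth point $e$.
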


\begin{proof}
Let $X = \{a, b, c, d, e\}$ with the following distances: $d_p(a,b) = d_p(c,d) = 1/5; d_p(a,c) = d_p(a,d) = d_p(b,c) = d_p(b,d) = 1; d_p(\cdot,e) = 25 \text{ for every other point}; d_c(a,c) = d_c(b,d) = d_c(a,d) = d_c(b,c) = 0.3; d_c(a,b) = d_c(c,d) = 0.5; d_c(\cdot,e) = 0.3 \text{ for every other point}.$ One verifies that $d_p$ is ultrametric (every triangle is isosceles) and $d_c$ satisfies the triangle inequality. At thresholds $(\epsilon_p, \epsilon_c) = (1, 0.4)$, the bi-filtration excludes all edges incident to $e$ (since $d_p(\cdot, e) = 25$) and excludes $\{a,b\}, \{c,d\}$ (since $d_c = 0.5$). The remaining four edges $\{a,c\}, \{a,d\}, \{b,c\}, \{b,d\}$ form a 4-cycle $a$--$c$--$b$--$d$--$a$. No triangle exists in the complex, so $H_1 \cong \mathbb{Z}$. The $d_c$-only filtration at the same $\epsilon_c = 0.4$ includes all eight cross-cluster and witness edges, producing filling triangles $\{a,c,e\}, \{c,b,e\}, \{b,d,e\}, \{d,a,e\}$ that bound the 4-cycle, hence $H_1 = 0$. The $d_p$-only filtration produces a tetrahedron on $\{a,b,c,d\}$ with $e$ isolated, also yielding $H_1 = 0$. Figure~\ref{fig:bifilt_evolution} shows the three configurations.
\end{proof}

\begin{figure}[!htb]
\centering
\resizebox{\columnwidth}{!}{%
\begin{tikzpicture}[every node/.style={font=\footnotesize}]
\def\pa{(0, 1.5)}
\def\pb{(0.8, 2.4)}
\def\pc{(2.5, 0.2)}
\def\pd{(3.3, 1.1)}
\def\pe{(1.65, 4.0)}

\begin{scope}[xshift=0cm]
  \node at (1.65, 5.0) {(a) $d_p$-only};
  \node at (1.65, 4.6) {\footnotesize $\epsilon_p = 1$};
  \fill[blue!12] \pa -- \pb -- \pd -- \pc -- cycle;
  \fill[blue!20] \pa -- \pb -- \pc -- cycle;
  \fill[blue!20] \pa -- \pc -- \pd -- cycle;
  \draw[blue!50!black, thick] \pa -- \pb;
  \draw[blue!50!black, thick] \pa -- \pc;
  \draw[blue!50!black, thick] \pa -- \pd;
  \draw[blue!50!black, thick] \pb -- \pc;
  \draw[blue!50!black, thick] \pb -- \pd;
  \draw[blue!50!black, thick] \pc -- \pd;
  \fill[blue!40!black] \pa circle (1.5pt) node[above left] {$a$};
  \fill[blue!40!black] \pb circle (1.5pt) node[above right] {$b$};
  \fill[blue!40!black] \pc circle (1.5pt) node[below left] {$c$};
  \fill[blue!40!black] \pd circle (1.5pt) node[below right] {$d$};
  \fill[black!80] \pe circle (1.5pt) node[above] {$e$};
  \node at (1.65, -0.5) {\footnotesize $\beta_1 = 0$ (tetrahedron + $e$)};
\end{scope}

\begin{scope}[xshift=4.5cm]
  \node at (1.65, 5.0) {(b) $d_c$-only};
  \node at (1.65, 4.6) {\footnotesize $\epsilon_c = 0.4$};
  \fill[orange!15] \pa -- \pc -- \pe -- cycle;
  \fill[orange!15] \pc -- \pb -- \pe -- cycle;
  \fill[orange!15] \pb -- \pd -- \pe -- cycle;
  \fill[orange!15] \pd -- \pa -- \pe -- cycle;
  \draw[orange!80!black, thick] \pa -- \pc;
  \draw[orange!80!black, thick] \pa -- \pd;
  \draw[orange!80!black, thick] \pb -- \pc;
  \draw[orange!80!black, thick] \pb -- \pd;
  \draw[orange!80!black, thick] \pa -- \pe;
  \draw[orange!80!black, thick] \pb -- \pe;
  \draw[orange!80!black, thick] \pc -- \pe;
  \draw[orange!80!black, thick] \pd -- \pe;
  \fill[orange!80!black] \pa circle (1.5pt) node[above left] {$a$};
  \fill[orange!80!black] \pb circle (1.5pt) node[above right] {$b$};
  \fill[orange!80!black] \pc circle (1.5pt) node[below left] {$c$};
  \fill[orange!80!black] \pd circle (1.5pt) node[below right] {$d$};
  \fill[orange!80!black] \pe circle (1.5pt) node[above] {$e$};
  \node at (1.65, -0.5) {\footnotesize $\beta_1 = 0$ (cycle filled via $e$)};
\end{scope}

\begin{scope}[xshift=9.0cm]
  \node at (1.65, 5.0) {(c) Bi-filtration};
  \node at (1.65, 4.6) {\footnotesize $(\epsilon_p, \epsilon_c) = (1, 0.4)$};
  \draw[purple!70!black, very thick] \pa -- \pc;
  \draw[purple!70!black, very thick] \pa -- \pd;
  \draw[purple!70!black, very thick] \pb -- \pc;
  \draw[purple!70!black, very thick] \pb -- \pd;
  \fill[purple!70!black] \pa circle (1.5pt) node[above left] {$a$};
  \fill[purple!70!black] \pb circle (1.5pt) node[above right] {$b$};
  \fill[purple!70!black] \pc circle (1.5pt) node[below left] {$c$};
  \fill[purple!70!black] \pd circle (1.5pt) node[below right] {$d$};
  \fill[black!80] \pe circle (1.5pt) node[above] {$e$};
  \node at (1.65, -0.5) {\footnotesize $\beta_1 = 1$ (open 4-cycle)};
\end{scope}
\end{tikzpicture}%
}
\caption{Illustration of Proposition~\ref{prop:nontrivial} on the five-point configuration $\{a, b, c, d, e\}$ at $(\epsilon_p, \epsilon_c) = (1, 0.4)$, where $e$ is compositionally close to every point but $p$-adically far. \textbf{(a)}~$d_p$-only: tetrahedron on $\{a,b,c,d\}$ with $e$ isolated ($\beta_1 = 0$). \textbf{(b)}~$d_c$-only: the 4-cycle $a$--$c$--$b$--$d$--$a$ is filled through $e$ ($\beta_1 = 0$). \textbf{(c)}~Bi-filtration: $e$'s edges are excluded, leaving the 4-cycle unfilled ($\beta_1 = 1$). The cycle is invisible to either single-axis filtration alone.}
\label{fig:bifilt_evolution}
\end{figure}

\begin{remark}
Proposition~\ref{prop:nontrivial} is an existence statement. 
In practice we use the whole 2D Betti landscape, since no single threshold pair carries the signal on its own.
\end{remark}

\begin{example}[Bi-filtration on biological sequences]\label{ex:bifilt}
Consider six DNA sequences forming two evolutionary clades, with representative $4$-mer prefixes shown in the table below. All within-clade pairs share the prefix \texttt{AC} or \texttt{TG}, yielding $d_p \leq 5^{-2} = 0.04$. Cross-clade pairs differ at position $0$ and yield $d_p = 1$. At $\epsilon_p = 0.04$, only within-clade edges appear, and the compositional axis reveals internal clade structure. At $\epsilon_p = 1$, cross-clade edges may appear selectively via compositional, potentially creating $1$-cycles that signal cross-clade similarity invisible to either single filtration alone.

\begin{center}
\begin{tabular}{lll}
\hline
Sequence & Clade & $4$-mer prefix ($\phi_5$) \\
\hline
$s_1$ = \texttt{ACGT}\ldots & I & \texttt{ACGT} ($586$) \\
$s_2$ = \texttt{ACGA}\ldots & I & \texttt{ACGA} ($211$) \\
$s_3$ = \texttt{ACTT}\ldots & I & \texttt{ACTT} ($561$) \\
$s_4$ = \texttt{TGCA}\ldots & II & \texttt{TGCA} ($189$) \\
$s_5$ = \texttt{TGCG}\ldots & II & \texttt{TGCG} ($439$) \\
$s_6$ = \texttt{TGTA}\ldots & II & \texttt{TGTA} ($164$) \\
\hline
\end{tabular}
\end{center}
\end{example}

\smallskip
\noindent\textbf{Stability and prime invariance.}\enspace The stability result below applies to any pair of metrics $(d_p, d_c)$ on a finite set; for the \texttt{pVR} pipeline it is applied with $d_p = D_p$ and $d_c = D_c$. The statement uses three standard objects from multi-parameter persistent homology. First, we write $\mathrm{PH}(\mathrm{VR}_{d_p, d_c})$ for the bi-persistence module obtained by applying simplicial homology degree-wise to the bi-filtration of Definition~\ref{def:bifilt}, viewed as a functor $(\mathbb{R}^2, \leq) \to \mathrm{Vec}_F$ over a fixed field $F$ of coefficients~\cite{carlsson2007theory}. Second, given two persistence modules $M, M'$ over $(\mathbb{R}^2, \leq)$, the \emph{interleaving distance} $d_I(M, M')$ is the infimum over $\delta \geq 0$ such that there exist morphisms shifting each module by $(\delta, \delta)$ that compose to the respective structure maps; we refer to~\cite{lesnick2015theory} for the formal definition. Third, for two metrics $d, d'$ on a finite set $X$, the \emph{sup-norm} is $\|d - d'\|_\infty = \max_{x, y \in X} |d(x, y) - d'(x, y)|$. Setting $(d_p, d_c) = (D_p, D_c)$ and $(d_p', d_c') = (D_p', D_c')$ for two sequence sets that differ by small perturbations of either distance therefore gives robustness of the bi-persistence module to such perturbations.

\begin{proposition}[Stability]\label{prop:stability}
Let $(X, d_p, d_c)$ and $(X, d_p', d_c')$ be two bi-metric structures on the same finite set. Using the $L_\infty$ interleaving distance on $(\mathbb{R}^2, \leq)$ in the sense of~\cite{lesnick2015theory},
\begin{align*}
d_I(\mathrm{PH}(\mathrm{VR}_{d_p, d_c}), \mathrm{PH}(\mathrm{VR}_{d_p', d_c'}))
\leq \max(\|d_p - d_p'\|_\infty, \|d_c - d_c'\|_\infty).
\end{align*}
\end{proposition}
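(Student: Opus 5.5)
The plan is to construct an explicit $(\delta,\delta)$-interleaving at the level of simplicial bi-filtrations and then pass to homology by functoriality. Set $\delta = \max(\|d_p - d_p'\|_\infty, \|d_c - d_c'\|_\infty)$. Write $\mathrm{VR}(\epsilon_p,\epsilon_c)$ and $\mathrm{VR}'(\epsilon_p,\epsilon_c)$ for the bi-filtered complexes of Definition~\ref{def:bifilt} built from $(d_p,d_c)$ and $(d_p',d_c')$ respectively. Both live on the same vertex set $X$, so the identity map on $X$ is the natural candidate for all interleaving morphisms.

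The first step is the inclusion
\[
\mathrm{VR}(\epsilon_p,\epsilon_c) \subseteq \mathrm{VR}'(\epsilon_p+\delta,\epsilon_c+\delta),
\]
and symmetrically with the primed and unprimed structures exchanged. To see this, take $\sigma \in \mathrm{VR}(\epsilon_p,\epsilon_c)$. For every pair $x,y\in\sigma$ we have $d_p'(x,y) \le d_p(x,y)+\delta \le \epsilon_p+\delta$, and likewise $d_c'(x,y)\le\epsilon_c+\delta$. Hence $\sigma\in\mathrm{VR}'(\epsilon_p+\delta,\epsilon_c+\delta)$. This uses only the flag (pairwise) nature of the definition. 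No triangle inequality or ultrametricity is needed, which is why the statement holds for arbitrary pairs of metrics and in particular for $(D_p,D_c)$.

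Next, I would check the interleaving axioms at the simplicial level. All maps involved are inclusions of subcomplexes of the full simplex on $X$: the structure maps of each bi-filtration and the shift maps $\mathrm{VR}(u)\hookrightarrow\mathrm{VR}'(u+(\delta,\delta))$ and $\mathrm{VR}'(u)\hookrightarrow\mathrm{VR}(u+(\delta,\delta))$. Any two composites of inclusions with the same source and target coincide. So naturality in $u\in(\mathbb{R}^2,\le)$ holds automatically, and the round-trip composites $\mathrm{VR}(u)\hookrightarrow\mathrm{VR}(u+(2\delta,2\delta))$ equal the internal structure maps, and likewise for $\mathrm{VR}'$. Applying simplicial homology over $F$ in each degree is a functor, so it carries these commuting diagrams to a $(\delta,\delta)$-interleaving of $\mathrm{PH}(\mathrm{VR}_{d_p,d_c})$ and $\mathrm{PH}(\mathrm{VR}_{d_p',d_c'})$ in the sense of~\cite{lesnick2015theory}. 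Taking the infimum over admissible shifts then gives the bound.

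The main obstacle is bookkeeping rather than mathematics. One must make sure the convention of~\cite{lesnick2015theory} uses the diagonal shift $(\delta,\delta)$, which is what makes the $L_\infty$ maximum of the two perturbations the right quantity. Closed thresholds ($\le$) also cause no issue at the boundary. If the interleaving distance were instead defined with separate shifts per axis, I would construct the same inclusions with shift vector $(\|d_p-d_p'\|_\infty,\|d_c-d_c'\|_\infty)$ and bound it by its maximum, since shifting further along a poset direction preserves an interleaving by composition with structure maps. Two edge cases also need care. If $\delta=0$ the complexes coincide and the distance is $0$. The statement is about modules in every homological degree simultaneously, so the argument should be phrased for each degree, or for the graded module.
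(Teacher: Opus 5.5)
Your proposal is correct and follows the paper's proof: the pairwise bounds $d_p'(x,y)\le d_p(x,y)+\delta$ and $d_c'(x,y)\le d_c(x,y)+\delta$ give the inclusions $\mathrm{VR}(\epsilon_p,\epsilon_c)\subseteq\mathrm{VR}'(\epsilon_p+\delta,\epsilon_c+\delta)$ and their symmetric counterparts, and these assemble into a diagonal $\delta$-interleaving. You also supply details the paper leaves implicit: the interleaving diagrams commute automatically because every map is an inclusion of subcomplexes of the full simplex on $X$, and functoriality of homology carries the simplicial interleaving to the persistence modules.
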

\begin{proof}
Set $\delta_p = \|d_p - d_p'\|_\infty$ and $\delta_c = \|d_c - d_c'\|_\infty$. For any pair $x, y \in X$, $d_p'(x,y) \leq d_p(x,y) + \delta_p$ and likewise for $d_c$, so $\mathrm{VR}_{d_p, d_c}(\epsilon_p, \epsilon_c) \subseteq \mathrm{VR}_{d_p', d_c'}(\epsilon_p + \delta_p, \epsilon_c + \delta_c)$ and the reverse inclusion holds by symmetry. Setting $\delta = \max(\delta_p, \delta_c)$, these inclusions yield a diagonal $\delta$-interleaving of the two bi-persistence modules under the $L_\infty$ shift convention on $\mathbb{R}^2$, analogous to the algebraic stability framework for multi-parameter persistence~\cite{lesnick2015theory,carlsson2007theory}.
\end{proof}

\begin{proposition}[Prime invariance]\label{prop:prime_invariance}
Let $\Sigma$ be a finite alphabet with $|\Sigma| = q$. For any two primes $p_1, p_2 \geq q + 1$, the multi-scale histogram distances $D_{p_1}$ and $D_{p_2}$ are identical on any set of sequences over $\Sigma$, up to a relabeling of histogram bins.
\end{proposition}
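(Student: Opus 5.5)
The plan is to show that, at each scale $j$, the partition of $k$-mers into histogram bins does not depend on $p$. Only the set of occupied bins, not their numerical labels, enters the $L_1$ distance. Fix $j \in \{1,\ldots,J\}$ and a prime $p \geq q+1$. For a $k$-mer $w = w_0\cdots w_{k-1}$, the encoding in Eq.~\eqref{eq:encoding} gives
\[
\phi_p(w) \bmod p^j = \sum_{i=0}^{j-1} \phi(w_i)\, p^i ,
\]
because every term with $i \geq j$ is divisible by $p^j$. The first step is to check that the map $\rho_{p,j}\colon \Sigma^j \to \{0,\ldots,p^j-1\}$, $u \mapsto \sum_{i<j}\phi(u_i)p^i$, is injective. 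This holds because $\phi$ takes values in $\{1,\ldots,q\} \subseteq \{0,\ldots,p-1\}$ when $p \geq q+1$, so the sum is a genuine base-$p$ expansion with digits $\phi(u_i)$. Uniqueness of base-$p$ representation then gives injectivity. Consequently, two $k$-mers fall into the same scale-$j$ bin if and only if they share the same length-$j$ prefix. This is the statement "scale $j$ groups $k$-mers sharing the same first $j$ characters," now made precise for every admissible $p$.

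The second step is to define the relabeling explicitly. The bins that can ever be nonempty are exactly the images $\rho_{p,j}(\Sigma^j)$. The remaining $p^j - q^j$ bins are identically zero for every sequence, so they contribute nothing to any $L_1$ difference. For $p_1, p_2$, set $\pi_j = \rho_{p_2,j}\circ\rho_{p_1,j}^{-1}$ on $\rho_{p_1,j}(\Sigma^j)$. This is a bijection onto $\rho_{p_2,j}(\Sigma^j)$; extend it arbitrarily to a bijection between the zero bins, after padding the histogram with fewer bins by zero entries if $p_1 \neq p_2$. Then, for every sequence $s$, the scale-$j$ histogram for $p_2$ equals the scale-$j$ histogram for $p_1$ with its bins permuted by $\pi_j$. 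Both histograms are simply the normalised counts of length-$j$ prefixes among the overlapping $k$-mers of $s$, and the normalisation $m_s$ does not involve $p$.

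The final step uses the fact that the $L_1$ norm is invariant under coordinate permutations and unaffected by appending coordinates that are zero in both vectors. Hence $\|h_j^{(s_i)} - h_j^{(s_\ell)}\|_1$ is the same for $p_1$ and $p_2$ at every scale. The weights $j$ and the cutoff $J = \min(k,3)$ in Eq.~\eqref{eq:dp} are also independent of $p$, so $D_{p_1} = D_{p_2}$ on any sequence set.

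The main obstacle is making the phrase "up to a relabeling" rigorous when the ambient bin counts $p_1^j$ and $p_2^j$ differ. I would handle this as above, by restricting to the $q^j$ realisable bins and observing that the others are zero. The injectivity argument is where $p \geq q+1$ is essential. For $p \leq q$ a digit $\phi(u_i)=p$ would carry into the next place, and distinct prefixes could collide modulo $p^j$; for example, with $p=3$, $\phi(G)=3 \equiv 0$. I would note this to show that the hypothesis cannot be dropped.
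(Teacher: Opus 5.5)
Your proof is correct and follows the paper's argument: when $p \geq q+1$ the base-$p$ prefix encoding is injective, so the scale-$j$ bins correspond bijectively to $j$-prefixes, the histograms agree up to relabeling, and the $L_1$ distances coincide. Your explicit treatment of the $p^j - q^j$ permanently empty bins, and of the differing ambient bin counts via zero-padding, makes rigorous the relabeling step that the paper states only informally.
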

\begin{proof}
At scale $j$, the histogram bins are indexed by $\sum_{i=0}^{j-1} \phi(w_i) \cdot p^i$ for $j$-prefixes $(w_0, \ldots, w_{j-1}) \in \{1,\ldots,q\}^j$. Since $p \geq q+1$, this map is injective (digits are $< p$, so the base-$p$ representation is unique), and changing $p$ only relabels the bin indices while preserving the bijection with the prefix multiset. The non-zero bins and their counts are determined by the multiset of $j$-prefixes, which is independent of $p$. The $L_1$ distance depends only on count differences across corresponding bins, hence $D_{p_1} = D_{p_2}$.
\end{proof}

\begin{remark}
For the DNA alphabet ($q = 4$), any prime $p \geq 5$ yields identical features. This eliminates $p$ as a hyperparameter, and we use $p = 5$ in all our experiments.
\end{remark}

\section{The pVR Framework}
\label{sec:methods}

\subsection{Method}

We summarise the steps for constructing the \texttt{pVR} pipeline in Procedure~\ref{alg:pvr} along with a graphical overview of this in Figure~\ref{fig:pipeline}.
Each sequence is encoded along two complementary axes: a
$p$-adic histogram capturing hierarchical $k$-mer prefix structure, and a
$k$-mer frequency vector capturing local composition. Pairwise distances
on these two encodings parameterise a bi-filtered Vietoris--Rips complex on a $G_p \times G_c$ threshold grid:
$G_p$ values of the $p$-adic threshold $\epsilon_p$ and $G_c$ values of the compositional
threshold $\epsilon_c$. 
From each grid point we read off
the per-sequence vertex degree (a per-sequence summary of local
connectivity) and the global Betti numbers $\beta_0, \beta_1$. The
concatenation of degree profiles, $p$-adic histograms, and $k$-mer
frequencies forms the feature vector for a standard classifier.
We consider three ML classifiers: XGBoost~\cite{chen2016xgboost}, SVM and $5$-NN.

\begin{figure*}[!htb]
\centering
\pgfdeclarelayer{background}
\pgfsetlayers{background,main}
\resizebox{0.75\textwidth}{!}{%
\begin{tikzpicture}[
  font=\small,
  data/.style={
    rectangle, rounded corners=2pt, draw, thick,
    minimum width=3.4cm, minimum height=0.85cm,
    align=center, fill=gray!8
  },
  op/.style={
    rectangle, rounded corners=5pt, draw, thick,
    minimum width=3.4cm, minimum height=0.85cm,
    align=center, fill=white
  },
  padic/.style={fill=blue!8, draw=blue!60!black},
  hamming/.style={fill=orange!10, draw=orange!70!black},
  bifilt/.style={fill=purple!8, draw=purple!60!black},
  final/.style={fill=green!10, draw=green!50!black},
  arrow/.style={->, thick, >=Stealth},
  arrowp/.style={->, thick, >=Stealth, blue!60!black},
  arrowh/.style={->, thick, >=Stealth, orange!70!black},
  arrowbf/.style={->, thick, >=Stealth, purple!60!black},
  lbl/.style={font=\footnotesize\itshape, text=gray!65!black}
]

\node[data, minimum width=7cm] (input) at (0,0) {
  DNA sequences $\mathcal{S}=\{s_1,\ldots,s_N\}$,\;
  $\Sigma=\{A,C,G,T\}$
};

\node[op] (kmer) at (0,-1.6) {$k$-mer extraction ($k=4$)};
\draw[arrow] (input) -- (kmer);

\node[lbl] at (-3.5,-2.15) {hierarchical (ultrametric)};
\node[lbl] at ( 3.5,-2.15) {compositional (metric)};

\node[op, padic] (padic_enc) at (-3.7,-3.3) {
  $p$-adic encoding\;$\phi(w)=\textstyle\sum_i\phi(w_i)\,p^i$
};
\node[op, padic] (padic_hist) at (-3.7,-5.05) {
  multi-scale histograms $h_1,h_2,h_3$
};
\node[data, padic] (Dp) at (-3.7,-6.7) {
  $D_p\;(N\!\times\!N)$, weighted $L_1$
};

\node[op, hamming] (freq) at (3.7,-3.3) {$k$-mer frequency vectors};
\node[data, hamming] (DH) at (3.7,-5.05) {
  $D_c\;(N\!\times\!N)$, $L_1$ distance
};

\draw[arrowp, rounded corners=5pt] ([xshift=-3pt]kmer.south) -- ++(0,-0.4) -| (padic_enc.north);
\draw[arrowh, rounded corners=5pt] ([xshift=3pt]kmer.south) -- ++(0,-0.4) -| (freq.north);

\draw[arrowp] (padic_enc) -- (padic_hist);
\draw[arrowp] (padic_hist) -- (Dp);

\draw[arrowh] (freq) -- (DH);

\node[op, bifilt, minimum width=5.3cm] (bifilt) at (0,-8.9) {
  bi-filtered VR complex $\mathrm{VR}(\epsilon_p,\epsilon_c)$\\[-2pt]
  on $G_p\times G_c$ grid
};

\draw[arrowp, rounded corners=5pt] (Dp.south)  -- ++(0,-0.8) -| ([xshift=-1.5cm]bifilt.north);
\draw[arrowh, rounded corners=5pt] (DH.south)  -- ++(0,-0.8) -| ([xshift=1.5cm]bifilt.north);

\node[font=\footnotesize, text=purple!60!black] at (0,-7.7) {
  $(\cap\!:$ edge in complex \textit{iff} both $\epsilon_p$ and $\epsilon_c$ constraints hold$)$
};

\node[data, minimum width=8.6cm] (features) at (0,-10.6) {
  [\;degree profiles $\mid$ $p$-adic histograms $\mid$ $k$-mer frequencies\;]
};

\draw[arrowbf] (bifilt.south)
  -- node[right, font=\footnotesize, purple!60!black, xshift=2pt] {degree profiles}
  (features.north);

\draw[arrowp, dashed, rounded corners=5pt]
  (padic_hist.west) -- ++(-0.7,0) |- (features.west);
\node[left, font=\footnotesize, blue!60!black] at (-4.75, -10.3) {$p$-adic hists};

\draw[arrowh, dashed, rounded corners=5pt]
  (freq.east) -- ++(0.7,0) |- (features.east);
\node[right, font=\footnotesize, orange!70!black] at (4.6, -10.3) {$k$-mer freq};

\node[op, minimum width=6.8cm] (clf) at (0,-12.1) {
  ML classifier \\
  (XGBoost\,/\,SVM\,/\,5-NN); 5-fold stratified CV
};
\draw[arrow] (features) -- (clf);

\node[data, final, minimum width=4.5cm] (output) at (0,-13.7) {
  predicted class-label \\
  (variant / species / serotype)
};
\draw[arrow] (clf) -- (output);

\begin{pgfonlayer}{background}
  \node[draw=blue!30, dashed, rounded corners, inner sep=8pt,
        fit=(padic_enc)(padic_hist)(Dp)] {};
  \node[draw=orange!40, dashed, rounded corners, inner sep=8pt,
        fit=(freq)(DH)] {};
\end{pgfonlayer}

\end{tikzpicture}
}
\caption{The \texttt{pVR} pipeline. Each sequence passes through two branches: a $p$-adic (hierarchical) branch gives the distance matrix $D_p$, and a compositional ($L_1$) branch gives $D_c$. 
The two matrices parameterise a bi-filtered Vietoris--Rips complex, from which per-sequence degree profiles are extracted; these, with the $p$-adic histograms and $k$-mer frequencies, form the feature vector for a standard classifier. The $\cap$ symbol denotes that an edge appears only when both distance constraints hold.}
\label{fig:pipeline}
\end{figure*}

\begin{algorithm}[!htb]
\caption{\texttt{pVR}: $p$-adic Bi-Filtered Classification}
\label{alg:pvr}
\begin{algorithmic}[1]
\REQUIRE Sequences $\mathcal{S} = \{s_1, \ldots, s_N\}$, labels $y$, prime $p$, $k$-mer size $k$, grid sizes $G_p, G_c$
\ENSURE Predicted class-labels
\STATE Extract overlapping $k$-mers from each $s_i$
\STATE Compute $p$-adic histograms $h_j^{(s_i)}$ at scales $j = 1, \ldots, \min(k,3)$
\STATE Compute $D_p$ via weighted $L_1$ across scales (Eq.~\ref{eq:dp})
\STATE Compute $D_c$ via $L_1$ on $k$-mer frequency vectors
\STATE Choose grid $\{\epsilon_p^{(a)}\}_{a=1}^{G_p} \times \{\epsilon_c^{(b)}\}_{b=1}^{G_c}$
\FOR{each grid point $(\epsilon_p^{(a)}, \epsilon_c^{(b)})$}
  \STATE Construct $\mathrm{VR}(\epsilon_p^{(a)}, \epsilon_c^{(b)})$
  \STATE Compute degree of each vertex $s_i$ in the $1$-skeleton
  \STATE Compute $\beta_0, \beta_1$
\ENDFOR
\STATE \textbf{Features} for each $s_i$: concatenate degree profiles, $p$-adic histograms, $k$-mer frequency vectors
\STATE Standardise features; train and test an ML classifier
\RETURN predicted class-labels, $\hat{y}$
\end{algorithmic}
\end{algorithm}

We now describe these three groups. For each sequence $s_i$:
(1)~\emph{degree profiles}, the number of neighbours of $s_i$ in $\mathrm{VR}(\epsilon_p, \epsilon_c)$ at each of the $G_p \times G_c$ grid points, capturing how local connectivity evolves across the bi-filtration; (2)~\emph{multi-scale $p$-adic histograms}, concatenated across scales; and (3)~\emph{$k$-mer frequency vectors}.
The Betti numbers $\beta_0$ and $\beta_1$ are global properties of the complex at each grid point and are therefore identical across all sequences in a dataset. 
They serve as dataset-level descriptors rather than per-sequence features,
and therefore
the per-sequence topological signal enters exclusively through the degree
profiles. 

\subsection{Implementation}
\label{sec:complexity_impl}
Let $N$ be the number of sequences and $k$ the $k$-mer size. The distance matrices require $O(N^2 \cdot J \cdot p^J)$ for $D_p$ and $O(N^2 \cdot 4^k)$ for $D_c$. For each of the $G_p \cdot G_c$ grid points, constructing the VR complex and computing Betti numbers takes $O(N^3)$ in the worst case under the dimension-$2$ simplicial expansion adopted in our implementation, which bounds the simplex count cubically in $N$; without truncation the count grows exponentially in $N$. The total complexity is therefore $O(N^2 (J p^J + 4^k + G_p G_c N))$.

We expand the complex only up to dimension $2$. Across all twelve datasets $\beta_2$ was zero at every grid point, which we attribute to the absence of higher-dimensional voids in sparsely-sampled finite metric spaces; 
the truncation is therefore exact here (dimension-$3$ expansion gives identical Betti numbers; see Section~\ref{sec:results}, Runtime). The bi-filtration grid loop and the $p$-adic distance matrix are parallelised across CPU cores via Joblib. 
\texttt{pVR} is implemented in Python using GUDHI~\cite{maria2014gudhi} (v3.12) for simplicial complexes and persistent homology, scikit-learn (v1.8) for classical classifiers, and XGBoost (v3.2) for gradient-boosted trees. 
All experiments were run on a single workstation with a 12-core AMD Ryzen CPU and 64~GB RAM; no GPU was used.
With the default parameters ($k = 4$, $J = 3$, $G_p = 10$, $G_c = 15$, $N \leq 500$), every dataset completes in under $30$ seconds (Table~\ref{tab:runtime}). 
Code and data for reproducing our experiments are available at: \url{https://github.com/MAHI-Group/pVR}.

\section{Empirical Evaluation}
\label{sec:expts}

\subsection{Setup}
\label{sec:setup}

\smallskip
\noindent\textbf{Datasets.}\enspace We evaluate \texttt{pVR} on twelve genomic classification benchmarks from NCBI GenBank, spanning two scale regimes (Table~\ref{tab:datasets}). The \emph{low-sample regime} comprises six small datasets with $N$ between $28$ and $150$. 
These reflect realistic low-data settings: emerging pathogens, rare species, or newly identified variants for which annotated sequences are scarce. 
The \emph{large-sample regime} comprises six datasets with $N$ between $73$ and $500$, obtained by expanding NCBI search queries on the same organisms, with approximately $75$ to $100$ sequences per class where available. The two regimes use overlapping organisms but distinct sequence sets, enabling a controlled comparison of \texttt{pVR}'s behaviour as sample size grows.

\begin{table}[!htb]
\centering
\caption{Benchmark datasets in two scale regimes. $N$ denotes the number of sequences after filtering, and $C$ denotes the number of classes after merging singletons.}
\label{tab:datasets}
\begin{tabular}{lccl}
\hline
Dataset & $N$ & $C$ & Task \\
\hline
\multicolumn{4}{l}{\textit{Low-sample regime}} \\
Mammalian mito (small) & 30 & 7 & Taxonomic order \\
SARS-CoV-2 (small) & 31 & 5 & Variant lineage \\
HRV (small) & 150 & 3 & Serotype (A/B/C) \\
Influenza HA (small) & 59 & 4 & HA subtype \\
HEV (small) & 29 & 3 & Genotype \\
Ebola (small) & 28 & 5 & Species \\
\hline
\multicolumn{4}{l}{\textit{Large-sample regime}} \\
SARS-CoV-2 (large) & 316 & 4 & Variant lineage \\
Influenza HA (large) & 300 & 4 & HA subtype \\
HRV (large) & 300 & 3 & Serotype \\
HEV (large) & 73 & 3 & Genotype \\
Ebola (large) & 99 & 4 & Species \\
Dengue (large) & 400 & 4 & Serotype \\
\hline
\end{tabular}
\end{table}

All experiments use $k = 4$ and $p = 5$. Classes with fewer than three members are merged into a residual ``Other'' class to ensure that stratified cross-validation is well-defined. Sequences shorter than 100 nucleotides are excluded as outliers. The mammalian dataset spans 11 taxonomic orders, reduced to 7 classes after merging, including Primates, Rodentia, Carnivora, Artiodactyla, Cetacea, and Perissodactyla. The SARS-CoV-2 datasets cover variant lineages (Original, Alpha, Beta, Gamma, Delta, Omicron) with the large-sample variant containing four well-represented lineages after the Alpha query returned no sequences. 
The Ebola dataset spans five species (EBOV, SUDV, BDBV, RESTV, TAFV). The Influenza HA datasets cover four hemagglutinin subtypes (H1N1, H3N2, H5N1, H7N9). The HEV datasets cover four genotypes, with the small dataset containing three after merging. The HRV datasets cover three serotypes (A, B, C). The Dengue dataset, used only at large scale, covers all four DENV serotypes.

\smallskip
\noindent\textbf{Baselines.}\enspace We implement four alignment-free baseline methods on the same data with the same evaluation protocol. FFP-JS uses $k$-mer frequency profiles ($k = 3$) compared via Jensen--Shannon divergence~\cite{sims2009alignment}. 
NVM uses nucleotide-level positional statistics, namely the count, mean position, and
normalised second central moment of each nucleotide~\cite{deng2011novel}; its features
are standardised before computing Euclidean distances. 
Mash uses MinHash sketches with $200$ hash functions and $k = 7$ to estimate Jaccard distance~\cite{ondov2016mash}. The $k$-mer frequency baseline uses $k = 4$ relative-frequency vectors with Euclidean distance after standardisation. Distance-based methods are evaluated using $5$-NN classification, while feature-based methods (the $k$-mer frequency baseline at the feature level, and \texttt{pVR}) are evaluated using XGBoost, RBF-kernel SVM, and $5$-NN. Accuracy is reported as mean $\pm$ standard deviation over $10$ repeats of $5$-fold stratified cross-validation, yielding up to $50$ paired fold accuracies per (dataset, method) cell with seeds $\{42, 43, \ldots, 51\}$. Some small datasets contain a class with fewer than five members, which forces $n_{\mathrm{folds}} < 5$ and reduces the count to $20$--$40$ folds. For headline comparisons, we report a one-sided paired Wilcoxon signed-rank test on the paired fold differences.
We note that fold-level accuracies from repeated cross-validation are not statistically independent, since training sets overlap across folds and seeds. The resulting paired Wilcoxon $p$-values are therefore mildly anti-conservative in the sense of~\cite{Nadeau2003,bouckaert2004evaluating}; we report them as a consistency check on repeated paired differences rather than as formal hypothesis tests, and the gaps we emphasise (Ebola, Influenza HA) here are large enough that this concern does not affect the qualitative conclusions.

\subsection{Results}
\label{sec:results}

We report results in two scale regimes, low-sample and large-sample, followed by a comparison with foundation-model embeddings and supporting analyses.

\smallskip
\noindent\textbf{Low-sample regime.}\enspace
On the six low-sample benchmarks (Table~\ref{tab:main-small}), \texttt{pVR} is strongest
on three (Ebola, Influenza HA, mammalian mitochondrial), is a statistical tie with FFP-JS
on HEV, ties the best baseline on HRV, and trails NVM on SARS-CoV-2.
The largest gain is on Ebola: \texttt{pVR}-SVM reaches $100.0 \pm 0.0\%$ against $78.9 \pm 5.7\%$ for MinHash, a $21.1$-point gap (one-sided paired Wilcoxon, $p < 10^{-4}$, $n = 20$).
Influenza HA is similar ($72.5 \pm 11.0\%$ vs $62.2\%$ for FFP-JS; $p < 10^{-4}$, $n = 50$), as is mammalian mitochondrial ($62.0 \pm 13.7\%$ vs $53.9\%$ for FFP-JS; $p = 0.0002$, $n = 40$). 
The gains track datasets with clearer hierarchical class separation. HEV is a near-tie ($67.0 \pm 8.2\%$ vs $66.0 \pm 6.5\%$ for FFP-JS; $p = 0.27$), and HRV is saturated, with \texttt{pVR}, FFP-JS, and the $k$-mer baseline all at $100\%$. 
The exception is SARS-CoV-2: \texttt{pVR}-XGBoost reaches $42.1 \pm 15.7\%$ against
$47.5 \pm 17.0\%$ for NVM. 
Its variants differ by scattered point mutations rather than hierarchical divergence, which the $p$-adic axis does not capture; raising $k$ from $4$ to $6$ recovers $61.0\%$ (sensitivity analysis below).

\begin{table*}[!htb]
\centering
\caption{Classification accuracy (\%) in the low-sample regime. Each cell is mean $\pm$ std over up to 50 fold accuracies (10 seeds $\times$ 5 folds; reduced when class size forces $n_{\mathrm{folds}} < 5$). Baselines use $5$-NN. \texttt{pVR} reports the best classifier per dataset; the symbol indicates the classifier ($^\dagger$5-NN, $^\ddagger$XGBoost, $^\S$SVM). Best in bold.}
\label{tab:main-small}
\setlength{\tabcolsep}{3pt}
\begin{tabular}{lcccccc}
\hline
Method & Mam. & CoV-2 & HRV & Inf. & HEV & Ebola \\
\hline
FFP-JS       & 53.9 $\pm$ 18.2 & 42.6 $\pm$ 14.7 & 100.0 $\pm$ 0.0 & 62.2 $\pm$ 12.4 & 66.0 $\pm$ 6.5 & 76.8 $\pm$ 6.7 \\
NVM & 46.3 $\pm$ 14.9 & 47.5 $\pm$ 17.0 & 96.7 $\pm$ 2.4 & 61.9 $\pm$ 12.5 & 65.6 $\pm$ 6.7 & 72.5 $\pm$ 2.6 \\
MinHash & 50.4 $\pm$ 14.3 & 25.7 $\pm$ 7.6 & 98.0 $\pm$ 2.2 & 53.9 $\pm$ 13.8 & 57.0 $\pm$ 7.5 & 78.9 $\pm$ 5.7 \\
$k$-mer freq & 49.9 $\pm$ 14.9 & 45.1 $\pm$ 16.8 & 100.0 $\pm$ 0.0 & 60.0 $\pm$ 12.3 & 65.2 $\pm$ 5.6 & 57.9 $\pm$ 2.1 \\
\hline
\texttt{pVR} & \textbf{62.0 $\pm$ 13.7}$^\dagger$
             & 42.1 $\pm$ 15.7$^\ddagger$
             & \textbf{100.0 $\pm$ 0.0}$^\dagger$
             & \textbf{72.5 $\pm$ 11.0}$^\ddagger$
             & \textbf{67.0 $\pm$ 8.2}$^\S$
             & \textbf{100.0 $\pm$ 0.0}$^\S$ \\
\hline
\end{tabular}
\end{table*}

\smallskip
\noindent\textbf{Large-sample regime.}\enspace
With more data, the benchmarks saturate (Table~\ref{tab:main-large}): most methods land between $96$ and $100\%$, and Dengue, Ebola, and HRV exceed $99\%$ across every baseline, so simple compositional features already suffice. \texttt{pVR} stays in this band, with \texttt{pVR}-SVM at $99.2$--$100\%$ on five datasets and $98.5 \pm 3.1\%$ on HEV-large; no gap to the best baseline is significant ($p > 0.16$). The lack of an edge here reflects task saturation, not an uninformative bi-filtration: its hierarchical structure may still help on downstream tasks such as phylogenetic reconstruction or recombination analysis (Section~\ref{sec:discussion}).

\begin{table*}[!htb]
\centering
\caption{Classification accuracy (\%) in the large-sample regime. Each cell is mean $\pm$ std over up to $50$ fold accuracies (10 seeds $\times$ 5 folds). Best in bold.}
\label{tab:main-large}
\setlength{\tabcolsep}{3pt}
\begin{tabular}{lcccccc}
\hline
Method & Dengue & Ebola & HEV & HRV & Inf. & CoV-2 \\
\hline
FFP-JS       & \textbf{100.0 $\pm$ 0.0} & \textbf{100.0 $\pm$ 0.0} & 97.3 $\pm$ 1.5 & \textbf{99.4 $\pm$ 1.4} & 99.0 $\pm$ 1.2 & 99.2 $\pm$ 1.0 \\
NVM & 99.7 $\pm$ 0.5 & 100.0 $\pm$ 0.0 & 95.6 $\pm$ 2.0 & 96.3 $\pm$ 2.3 & 99.0 $\pm$ 1.2 & 96.1 $\pm$ 2.6 \\
MinHash & 100.0 $\pm$ 0.0 & 100.0 $\pm$ 0.0 & 98.6 $\pm$ 1.4 & 94.9 $\pm$ 2.7 & 99.7 $\pm$ 0.7 & 70.9 $\pm$ 3.7 \\
$k$-mer freq & \textbf{100.0 $\pm$ 0.0} & \textbf{100.0 $\pm$ 0.0} & \textbf{99.9 $\pm$ 0.6} & \textbf{99.4 $\pm$ 1.4} & 98.8 $\pm$ 1.3 & 99.4 $\pm$ 0.9 \\
\hline
\texttt{pVR} & 99.4 $\pm$ 0.8$^\ddagger$
             & \textbf{100.0 $\pm$ 0.0}$^\dagger$
             & 98.5 $\pm$ 3.1$^\S$
             & 99.2 $\pm$ 1.1$^\S$
             & 99.5 $\pm$ 0.9$^\S$
             & \textbf{99.6 $\pm$ 0.8}$^\S$ \\
\hline
\end{tabular}
\end{table*}

\smallskip
\noindent\textbf{Comparison with NT embeddings.}\enspace
We also compared \texttt{pVR} against zero-shot embeddings from Nucleotide Transformer~v2 (NT~v2, 500M parameters, multi-species)~\cite{dalla2025nucleotide} on Ebola, mammalian mitochondrial, and Influenza HA. We mean-pool the final-layer hidden states (chunking sequences past the 2048-token limit into non-overlapping 12~kb windows and averaging) and use them as frozen features under the same cross-validation protocol. \texttt{pVR} outperforms NT v2 on all three (Table~\ref{tab:nt_compare}), by $11.4$ points on
Ebola, $7.1$ on mammalian, and $6.7$ on Influenza~HA, and its cosine-UMAP projection
separates subtypes more cleanly than NT v2 (Figure~\ref{fig:nt_umap}). The comparison is
deliberately narrow, using frozen embeddings without fine-tuning and at most $60$
labelled sequences per task. 
In that regime \texttt{pVR}'s handcrafted hierarchical and compositional structure is a stronger inductive bias~\cite{mitchell1997machine,baxter2000model} than generic pretraining; a comparison against fine-tuned foundation models at larger $N$ is left to future work.

\begin{table}[!htb]
\centering
\caption{Comparison of \texttt{pVR} with Nucleotide Transformer~v2 (500M, multi-species) zero-shot embeddings on three low-sample benchmarks. NT embeddings are mean-pooled hidden states from the final layer; long sequences are chunked and averaged. Both methods use repeated stratified CV (up to $50$ folds) and the best-performing classifier per row.}
\label{tab:nt_compare}
\setlength{\tabcolsep}{4pt}
\begin{tabular}{lcc}
\hline
Dataset & \texttt{pVR} (best) & NT v2 (best probe) \\
\hline
Ebola & 100.0 $\pm$ 0.0$^\S$ & 88.6 $\pm$ 9.9$^\S$ \\
Mammalian & 62.0 $\pm$ 13.7$^\dagger$ & 54.9 $\pm$ 15.2$^\S$ \\
Influenza HA & 72.5 $\pm$ 11.0$^\ddagger$ & 65.8 $\pm$ 8.8$^\S$ \\
\hline
\end{tabular}
\end{table}

\begin{figure}[!htb]
\centering
\includegraphics[width=\columnwidth]{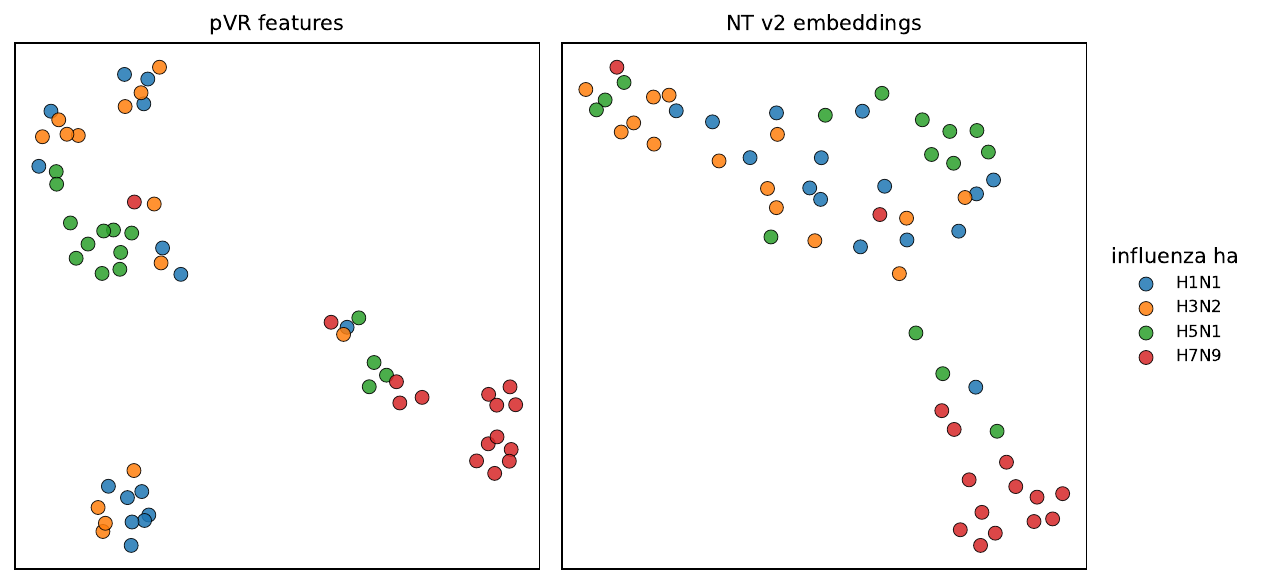}
\caption{Cosine-UMAP projections of \texttt{pVR} features (left) and
Nucleotide Transformer v2 zero-shot embeddings (right) on
Influenza HA-small ($N=59$, four subtypes). \texttt{pVR} produces
visibly subtype-separated clusters; NT~v2 embeddings show only weak
separation. Equivalent figures for the other two benchmarks are
included within our code repository.}
\label{fig:nt_umap}
\end{figure}

\smallskip
\noindent\textbf{Ablation.}\enspace
We classify with one feature group at a time, using XGBoost throughout (Tables~\ref{tab:ablation-small} and~\ref{tab:ablation-large}). The relative contribution of each group varies sharply across datasets. Combining axes helps most on Ebola, where the full representation reaches $91.4 \pm 8.0\%$ versus $90.7\%$ for the best single group ($p$-adic histograms), rising to $100\%$ under SVM (Table~\ref{tab:main-small}). On SARS-CoV-2 the combined representation ($42.1 \pm 15.7\%$) edges past every component (best single: $k$-mer frequencies, $40.7\%$) but stays below NVM. HRV is saturated across all ablations, and on Influenza HA the combination matches its best component. HEV and mammalian instead expose a classifier effect: under XGBoost the combination underperforms its best single group ($55.6\%$ vs $61.1\%$ on HEV; $47.7\%$ on mammalian), yet the same features reach $67.0\%$ and $62.0\%$ under SVM and $5$-NN respectively, so the loss reflects XGBoost's sensitivity to noisy features rather than a weakness of the representation.

At large $N$ most groups already exceed $95\%$, leaving little to combine; Dengue, Ebola, HRV, and Influenza are near-saturated on both hierarchical and compositional features alone. The informative exceptions are HEV-large, which seems to repeat the XGBoost pattern ($p$-adic histograms $93.5\%$, combined $92.4\%$, against $98.5\%$ under SVM/$5$-NN), and SARS-CoV-2-large, where the $p$-adic VR degree profiles alone reach only $80.0\%$ but lift the combination to $98.0\%$, complementing the $k$-mer frequencies that carry most of the signal on this dataset.

\begin{table*}[!htb]
\centering
\caption{Low-sample ablation study (XGBoost mean $\pm$ std accuracy \%, repeated CV with $10$ seeds $\times$ $5$ folds, matching the main-table protocol). Each row reports a classifier trained on a single feature group; \texttt{pVR} combined uses all groups. Comp.\ VR denotes the compositional ($L_1$) VR degree profiles.}
\label{tab:ablation-small}
\setlength{\tabcolsep}{3pt}
\begin{tabular}{lcccccc}
\hline
Features & Mam. & CoV-2 & HRV & Inf. & HEV & Ebola \\
\hline
$p$-adic VR    & 49.6 $\pm$ 15.4 & 39.5 $\pm$ 16.5 & 97.4 $\pm$ 2.5 & 64.1 $\pm$ 14.1 & 59.3 $\pm$ 11.4 & 85.4 $\pm$ 8.9 \\
Comp.\ VR      & \textbf{56.5 $\pm$ 17.2} & 26.4 $\pm$ 7.5 & 96.9 $\pm$ 3.1 & 67.4 $\pm$ 12.7 & \textbf{61.1 $\pm$ 11.0} & 86.1 $\pm$ 6.2 \\
Bi-filt.\ topo & 49.0 $\pm$ 16.1 & 39.5 $\pm$ 16.5 & 98.5 $\pm$ 2.2 & 64.4 $\pm$ 13.4 & 60.6 $\pm$ 10.4 & 85.0 $\pm$ 7.8 \\
$p$-adic hist  & 54.6 $\pm$ 15.4 & 37.5 $\pm$ 15.3 & 98.7 $\pm$ 2.2 & \textbf{73.0 $\pm$ 12.3} & 54.9 $\pm$ 10.5 & 90.7 $\pm$ 9.3 \\
$k$-mer freq   & 43.3 $\pm$ 14.2 & 40.7 $\pm$ 15.6 & 97.9 $\pm$ 2.4 & 72.7 $\pm$ 10.8 & 53.7 $\pm$ 15.5 & 87.5 $\pm$ 8.7 \\
\hline
\texttt{pVR} combined & 47.7 $\pm$ 13.5 & \textbf{42.1 $\pm$ 15.7} & \textbf{98.3 $\pm$ 2.5} & 72.5 $\pm$ 11.0 & 55.6 $\pm$ 15.7 & \textbf{91.4 $\pm$ 8.0} \\
\hline
\end{tabular}
\end{table*}

\begin{table*}[!htb]
\centering
\caption{Large-sample ablation study (XGBoost mean $\pm$ std accuracy \%, repeated CV with $10$ seeds $\times$ $5$ folds). Feature conventions are identical to those in Table~\ref{tab:ablation-small}.}
\label{tab:ablation-large}
\setlength{\tabcolsep}{3pt}
\begin{tabular}{lcccccc}
\hline
Features & Dengue & Ebola & HEV & HRV & Inf. & CoV-2 \\
\hline
$p$-adic VR    & 98.1 $\pm$ 1.5 & 98.6 $\pm$ 2.3 & 88.8 $\pm$ 4.6 & 95.0 $\pm$ 2.4 & 96.1 $\pm$ 2.4 & 80.0 $\pm$ 3.4 \\
Comp.\ VR      & 98.5 $\pm$ 1.0 & 98.3 $\pm$ 3.0 & 90.3 $\pm$ 5.7 & 95.3 $\pm$ 2.3 & 96.9 $\pm$ 2.3 & 93.6 $\pm$ 2.9 \\
Bi-filt.\ topo & 99.4 $\pm$ 0.8 & 99.4 $\pm$ 1.9 & 89.4 $\pm$ 5.2 & 95.5 $\pm$ 2.4 & 98.5 $\pm$ 1.4 & 93.2 $\pm$ 2.9 \\
$p$-adic hist  & \textbf{99.6 $\pm$ 0.8} & \textbf{100.0 $\pm$ 0.0} & \textbf{93.5 $\pm$ 5.3} & \textbf{98.8 $\pm$ 1.3} & 98.4 $\pm$ 1.6 & \textbf{98.1 $\pm$ 1.6} \\
$k$-mer freq   & \textbf{99.4 $\pm$ 0.9} & 97.6 $\pm$ 3.1 & 91.0 $\pm$ 5.5 & \textbf{99.0 $\pm$ 1.3} & \textbf{98.5 $\pm$ 1.6} & \textbf{98.3 $\pm$ 1.7} \\
\hline
\texttt{pVR} combined & \textbf{99.4 $\pm$ 0.8} & 99.6 $\pm$ 1.7 & 92.4 $\pm$ 6.3 & 98.5 $\pm$ 1.5 & \textbf{98.6 $\pm$ 1.6} & 98.0 $\pm$ 1.9 \\
\hline
\end{tabular}
\end{table*}

\smallskip
\noindent\textbf{Feature importance.}\enspace
Table~\ref{tab:importance} reports normalised XGBoost gain per feature group across the
three per-sequence groups. The global Betti numbers $\beta_0, \beta_1$ take a single
value per dataset at each grid point, so they are constant across sequences and cannot
serve as per-sequence features by construction; the topological signal enters instead
through the degree profiles, which summarise each sequence's connectivity in the
bi-filtered complex.
Degree profiles lead where classes are well separated (HRV-small, $54.7\%$; Ebola-large, $28.7\%$); $p$-adic histograms contribute steadily across both regimes; and $k$-mer frequencies dominate when class structure is compositional, most starkly on SARS-CoV-2-large at $95.7\%$. 
The three groups are thus complementary, with their relative weight set by the dataset's
evolutionary structure, so no single group dominates across all datasets.

\begin{table}[!htb]
\centering
\caption{Feature group importance (\%, XGBoost gain). Deg.\ denotes degree profiles, Hist denotes $p$-adic histograms, and Freq denotes $k$-mer frequencies. The global Betti numbers $\beta_0, \beta_1$ are constant across sequences within a dataset; they are omitted.}
\label{tab:importance}
\begin{tabular}{lccc}
\hline
Dataset & Deg. & Hist & Freq \\
\hline
\multicolumn{4}{l}{\textit{Low-sample}} \\
Ebola & 10.5 & 45.5 & 44.0 \\
HEV & 13.1 & 26.5 & 60.4 \\
HRV & 54.7 & 27.6 & 17.7 \\
Influenza & 1.6 & 35.4 & 63.0 \\
Mammalian & 2.4 & 44.7 & 52.9 \\
SARS-CoV-2 & 3.4 & 30.8 & 65.8 \\
\hline
\multicolumn{4}{l}{\textit{Large-sample}} \\
Dengue & 23.6 & 49.8 & 26.6 \\
Ebola & 28.7 & 26.4 & 45.0 \\
HEV & 10.8 & 43.7 & 45.5 \\
HRV & 10.5 & 81.9 & 7.5 \\
Influenza & 15.4 & 50.4 & 34.2 \\
SARS-CoV-2 & 0.4 & 3.9 & 95.7 \\
\hline
\end{tabular}
\end{table}

\smallskip
\noindent\textbf{Runtime.}\enspace
Every dataset runs in under $30$ seconds on a $12$-core workstation (Table~\ref{tab:runtime}). The costliest is Dengue-large ($N=400$): about $3$ seconds for the distance matrices and $25$ for the bi-filtration grid loop over $G_p \times G_c = 150$ threshold pairs. Restricting simplicial expansion to dimension $2$, exact here since $\beta_2$ was zero on every dataset, holds peak memory below $5$~GB; dimension-$3$ expansion pushed it past $60$~GB on Dengue-large for identical features. Cost is set less by $N$ than by the density of the complexes across the grid, and only a small fraction of grid points hit the expensive intermediate-density regime.

\begin{table}[!htb]
\centering
\caption{Runtime in seconds. Column ``pVR dist.'' is the time to compute $D_p$ and $D_c$. Column ``pVR feat.'' is the time to compute the bi-filtration grid and extract features. Column ``Baselines'' is the total time to compute all four baseline distance matrices.}
\label{tab:runtime}
\begin{tabular}{lrrrr}
\hline
Dataset & $N$ & pVR dist. & pVR feat. & Baselines \\
\hline
\multicolumn{5}{l}{\textit{Low-sample}} \\
Mammalian & 30 & 0.3 & 0.3 & 0.8 \\
SARS-CoV-2 & 31 & 0.5 & 0.5 & 1.0 \\
HRV & 150 & 0.8 & 2.0 & 2.3 \\
Influenza & 59 & 0.1 & 0.3 & 0.2 \\
HEV & 29 & 0.1 & 0.2 & 0.4 \\
Ebola & 28 & 0.3 & 0.3 & 0.8 \\
\hline
\multicolumn{5}{l}{\textit{Large-sample}} \\
Dengue & 400 & 3.1 & 25.3 & 9.1 \\
Ebola & 99 & 1.2 & 1.2 & 3.2 \\
HEV & 73 & 0.3 & 0.4 & 1.2 \\
HRV & 300 & 1.6 & 12.7 & 5.1 \\
Influenza & 300 & 0.7 & 8.2 & 1.9 \\
SARS-CoV-2 & 316 & 6.0 & 21.4 & 13.0 \\
\hline
\end{tabular}
\end{table}

\smallskip
\noindent\textbf{Sensitivity to hyperparameters.}\enspace
Table~\ref{tab:sens} reports XGBoost accuracy across $k \in \{3,4,5,6\}$; by Proposition~\ref{prop:prime_invariance} the prime is irrelevant ($p \in \{5,7,11,13\}$ give identical features), so only $k$ is varied. At low $N$, most datasets peak at $k=4$ or $5$, with one informative exception: SARS-CoV-2-small jumps from $47.6\%$ at $k=4$ to $61.0\%$ at $k=6$, consistent with longer $k$-mers capturing its variant-defining point mutations. The smallest datasets move the other way; mammalian and Influenza HA degrade at $k=6$, where the $4^6 = 4096$-dimensional compositional space is hard to estimate from a few dozen sequences. We fix $k=4$ for all main experiments rather than tuning per dataset, which would overfit and break cross-method comparability. 
At large $N$, accuracy is flat across $k$ to within a few points, so the choice of $k$-mer length barely matters.

\begin{table*}[!htb]
\centering
\caption{Sensitivity to $k$-mer size: XGBoost accuracy (\%, mean $\pm$ std over a single 5-fold split). The prime $p$ is invariant by Proposition~\ref{prop:prime_invariance}.}
\label{tab:sens}
\setlength{\tabcolsep}{3pt}
\begin{tabular}{lcccccc}
\hline
$k$ & \multicolumn{6}{c}{Low-sample datasets} \\
\cline{2-7}
 & Mam. & CoV-2 & HRV & Inf. & HEV & Ebola \\
\hline
3 & 53.6 $\pm$ 18.0 & 45.7 $\pm$ 20.2 & 98.7 $\pm$ 1.6 & 69.7 $\pm$ 15.2 & 51.9 $\pm$ 5.2 & 89.3 $\pm$ 10.7 \\
4 & 50.4 $\pm$ 14.9 & 47.6 $\pm$ 17.6 & 99.3 $\pm$ 1.3 & 73.0 $\pm$ 7.8 & 62.1 $\pm$ 2.1 & 92.9 $\pm$ 7.1 \\
5 & 56.2 $\pm$ 11.9 & 44.8 $\pm$ 17.8 & 100.0 $\pm$ 0.0 & 72.9 $\pm$ 11.0 & 58.3 $\pm$ 8.3 & 89.3 $\pm$ 10.7 \\
6 & 39.7 $\pm$ 23.0 & 61.0 $\pm$ 13.9 & 99.3 $\pm$ 1.3 & 64.5 $\pm$ 9.3 & 58.6 $\pm$ 1.4 & 92.9 $\pm$ 7.1 \\
\hline
$k$ & \multicolumn{6}{c}{Large-sample datasets} \\
\cline{2-7}
 & Dengue & Ebola & HEV & HRV & Inf. & CoV-2 \\
\hline
3 & 99.5 $\pm$ 0.6 & 99.0 $\pm$ 1.7 & 83.3 $\pm$ 8.3 & 99.0 $\pm$ 1.3 & 98.7 $\pm$ 1.2 & 98.7 $\pm$ 0.6 \\
4 & 99.0 $\pm$ 0.9 & 99.0 $\pm$ 1.7 & 84.7 $\pm$ 9.7 & 98.7 $\pm$ 1.2 & 99.0 $\pm$ 0.8 & 97.8 $\pm$ 2.4 \\
5 & 98.5 $\pm$ 0.9 & 100.0 $\pm$ 0.0 & 84.7 $\pm$ 9.7 & 98.7 $\pm$ 1.9 & 97.7 $\pm$ 2.5 & 97.8 $\pm$ 1.9 \\
6 & 99.0 $\pm$ 0.9 & 99.0 $\pm$ 1.7 & 80.6 $\pm$ 5.6 & 98.7 $\pm$ 1.2 & 97.7 $\pm$ 2.5 & 98.7 $\pm$ 1.2 \\
\hline
\end{tabular}
\end{table*}

\smallskip
\noindent\textbf{Topological visualisation.}\enspace
Figure~\ref{fig:topology_both} shows the $p$-adic and compositional $L_1$ distance matrices with the $\beta_0, \beta_1$ heatmaps for two datasets. The $p$-adic matrix has sharp block structure aligned with class boundaries (taxonomic orders for mammalian, serotypes for HRV), while the compositional matrix is smoother and more graded. The $\beta_1$ heatmaps make Proposition~\ref{prop:nontrivial} concrete on real data: nontrivial $1$-cycles appear in a narrow band at low-to-intermediate $p$-adic and moderate compositional threshold. At very low $p$-adic threshold the complex is restricted to within-clade pairs, and the compositional axis decides which connect; at high $p$-adic threshold it collapses to a full simplex per component and $\beta_1 \to 0$, as Theorem~\ref{thm:trivial} predicts. The band's shape varies with the data, wider and diffuse for mammalian (seven orders), narrower for HRV (three well-separated serotypes), sparse with low peaks for SARS-CoV-2-small (near-uniform $k$-mer composition across variants), and peaked at intermediate thresholds on small Ebola (cross-species relationships among \emph{Ebolavirus}).
We provide the heatmaps for SARS-CoV-2 and Ebola in the code repository.

\begin{figure*}[!htb]
\centering
\begin{tabular}{cc}
\includegraphics[width=0.48\textwidth]{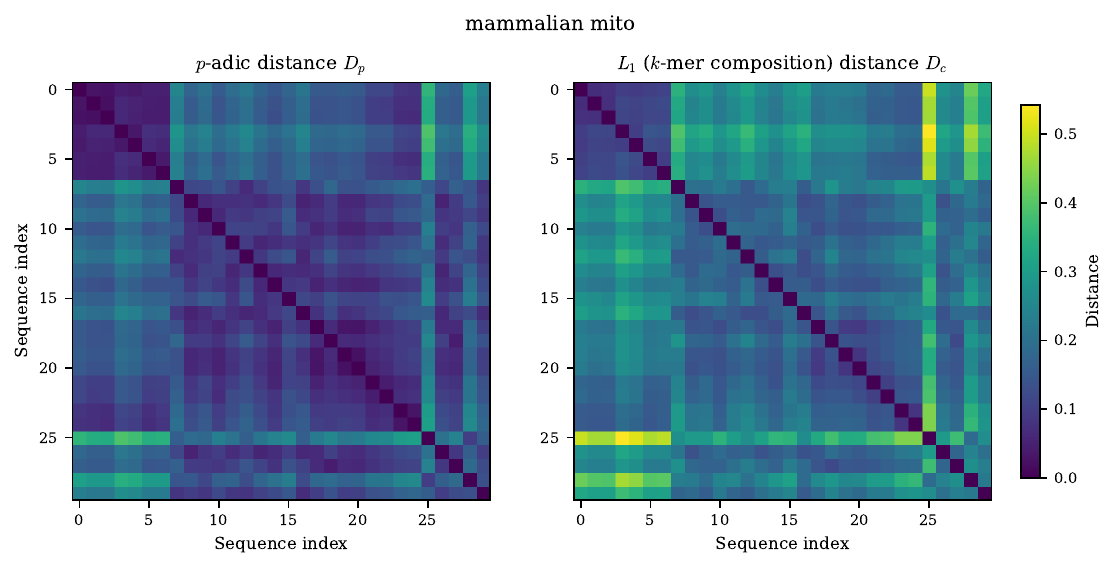} &
\includegraphics[width=0.48\textwidth]{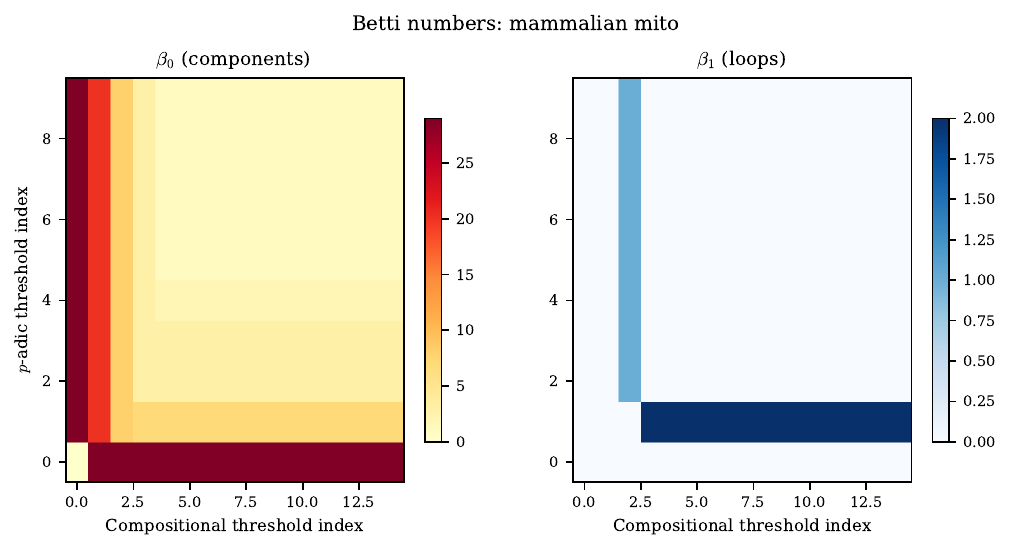} \\
\multicolumn{2}{c}{\small (a) Mammalian mitochondrial ($N=30$, 7 taxonomic orders)} \\[6pt]
\includegraphics[width=0.48\textwidth]{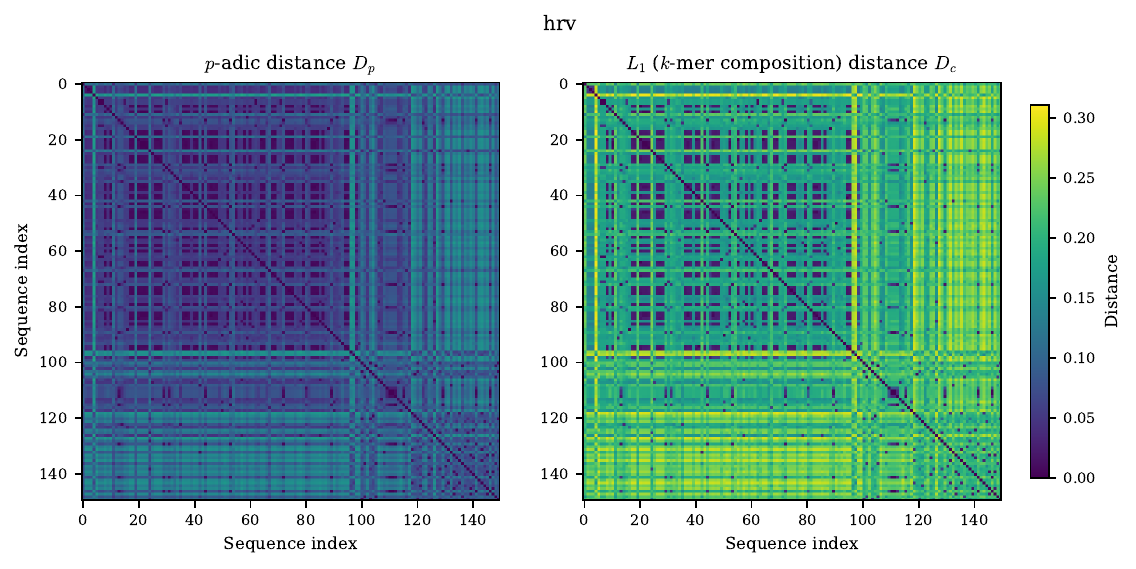} &
\includegraphics[width=0.48\textwidth]{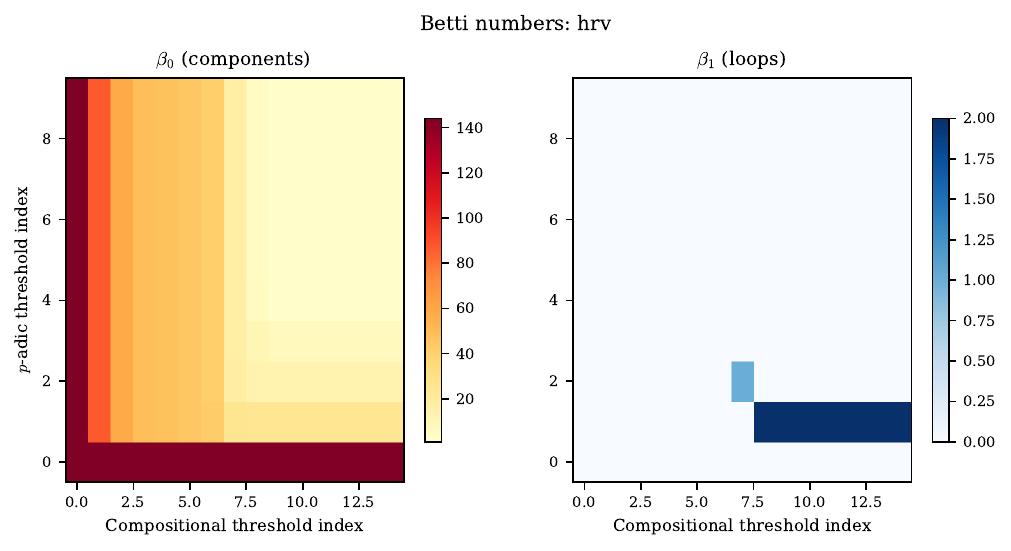} \\
\multicolumn{2}{c}{\small (b) HRV ($N=150$, three serotypes)} \\
\end{tabular}
\caption{Distance matrices (left) and Betti heatmaps (right) for two
low-sample datasets. \textbf{Distance matrices:} the $p$-adic distance
$D_p$ exhibits sharp block structure aligned with class boundaries while
the compositional $L_1$ distance $D_c$ varies more smoothly; both panels
in each row share a common colour scale. \textbf{Betti heatmaps:}
$\beta_0$ decays as both thresholds grow, while $\beta_1$ becomes
nontrivial in a narrow band at moderate compositional threshold and low
to intermediate $p$-adic threshold, empirically realising the
bi-filtration cycle predicted by Proposition~\ref{prop:nontrivial}.
$\beta_2$ vanished at every grid point and is therefore omitted.}
\label{fig:topology_both}
\end{figure*}

\section{Discussion}
\label{sec:discussion}

\smallskip
\noindent\textbf{When pVR helps.}\enspace
In the low-sample regime, three of six datasets improve significantly over baselines
(paired Wilcoxon $p < 0.05$), by $8.1$ to $21.1$ points, with a fourth (HEV) a
non-significant gain. 
The bi-filtration acts as an inductive prior~\cite{mitchell1997machine,kontolati2025biology}, constraining the hypothesis space with structural information that matters most when labels are scarce. 
This places \texttt{pVR} alongside efforts to build structured prior knowledge into
learning systems, whether supplied as external domain knowledge~\cite{dash2022review} or
mined from the data~\cite{dash2026birdnet}.
At large $N$ the alignment-free methods all converge to near-perfect accuracy, and pVR neither beats nor trails $k$-mer frequencies.
This saturation reflects the easiness of the task rather than a limit of the method, and
the stability and prime-invariance guarantees hold across both regimes.
In short, \texttt{pVR} helps when data is scarce and divergence is hierarchical, the
setting of emerging pathogens, rare species, and newly identified variants.
At large $N$ the task itself saturates: three datasets reach $99$--$100\%$ for every method and the rest exceed $95\%$, 
the one exception being MinHash on SARS-CoV-2-large ($70.9\%$), a known weakness of
sketch methods on near-identical genomes. 
The hierarchical signal from bi-filtration is more likely to help on harder tasks that do not saturate, such as phylogenetic reconstruction~\cite{semple2003phylogenetics}.

\smallskip
\noindent\textbf{When it does not.}\enspace
\texttt{pVR} underperforms on SARS-CoV-2-small ($42.1 \pm 15.7\%$ against $47.5 \pm 17.0\%$
for NVM), and the ablation is consistent with this: the $p$-adic VR and histogram
components score $39.5\%$ and $37.5\%$, both under the $k$-mer baseline. SARS-CoV-2
lineages descend from one ancestor by scattered point mutations rather than deep clade
divergence, so the $p$-adic prefix ordering has little to exploit.
Raising $k$ from $4$ to $6$ lifts accuracy from $47.6\%$ to $61.0\%$ (single-seed CV; sensitivity analysis above), as longer prefixes catch mutations that shorter ones miss; we keep $k = 4$ everywhere to avoid per-dataset tuning, which would overfit. Lineages dominated by point substitution call for longer $k$ or a different encoding. HEV-small is the boundary case, a statistical tie with FFP-JS ($67.0$ vs $66.0$, $p = 0.27$): HEV genotypes recombine~\cite{smith2014consensus}, breaking the hierarchical assumption, consistent with earlier TDA findings on recombination~\cite{camara2017topological}.

\smallskip
\noindent\textbf{Practical considerations.}\enspace
RBF-SVM is the most reliable \texttt{pVR} classifier, never ranking worst and ranking
best on the largest number of datasets, including the small ones where the gains are
largest. This may be due to its smooth decision boundary, which can tolerate noisy
features.
On the other hand, XGBoost~\cite{chen2016xgboost} is preferable when the class margin sits in a few features, as on SARS-CoV-2-large, where $k$-mer frequencies carry $95.7\%$ of the XGBoost gain. 
Similarly, $5$-NN suits datasets where degree profiles align with class structure, as on mammalian, where neighbourhood structure tracks taxonomy. 
We recommend SVM as the default among the classifiers tested here. 
A natural next step is to feed the bi-filtration features into a deep neural network, which could model richer, nonlinear interactions among the degree profiles, $p$-adic histograms, and $k$-mer frequencies in its hidden layers than a kernel method allows. The constraint is data: with $N$ in the tens to low hundreds, such a model would need strong regularisation or pretraining to avoid overfitting, so the gain is likeliest in the large-sample regime or on the downstream tasks discussed above.

Feature importance follows the same logic (Table~\ref{tab:importance}): degree profiles dominate when classes are well separated (HRV-small $54.7\%$, Ebola-large $28.7\%$), $p$-adic histograms when hierarchy aligns with classes, and $k$-mer frequencies when the structure is compositional. 
The Betti numbers are dataset-level descriptors rather than per-sequence features, which
points to an extension we leave open: per-sequence topological features, such as local
persistent homology around each vertex or on vertex-removed subcomplexes.

\section{Concluding Remarks}
\label{sec:conclusion}

\texttt{pVR} bridges $p$-adic number theory and topological data
analysis for alignment-free genomic classification. The construction
rests on four mathematical observations: ultrametric VR complexes have
trivial higher homology, the bi-filtration recovers nontrivial topology
from the interaction of two metrics, the resulting persistence module
is stable under metric perturbations, and the choice of prime is
immaterial provided $p \geq |\Sigma| + 1$. 
Empirically, \texttt{pVR} outperforms four alignment-free baselines on three of six
low-sample benchmarks (significance on Ebola, Influenza HA, and the mammalian dataset;
gap up to $21.1$ percentage points) and remains competitive when all methods saturate at
larger sample sizes.

We note several limitations of our present work. 
SARS-CoV-2 variants derive
from a single ancestral genome by scattered point mutations rather
than deep clade divergence; 
on this benchmark \texttt{pVR} underperforms compositional baselines by $5.4$ points. 
The HEV genotypes, which are
known to recombine~\cite{smith2014consensus}, sit at the boundary, with
\texttt{pVR} matching FFP-JS within fold-level noise. The $p$-adic axis
contributes when divergence is hierarchical and contributes little when
it is not.

The $O(N^3)$ cost of the bi-filtration grid loop becomes the bottleneck beyond $N = 500$ sequences, 
and witness-complex approximations on landmark sequences~\cite{de2004topological} are the standard route to larger $N$. 
The present uniform $G_p \times G_c$ grid could be replaced by an
adaptive grid concentrated where the topology changes. 
The nucleotide-to-digit assignment used in our work 
($A \mapsto 1, C \mapsto 2, G \mapsto 3, T \mapsto 4$) 
is arbitrary up to permutation. 
A biochemically
informed assignment that groups purines and pyrimidines may yield
sharper hierarchical structure than the lexicographic mapping. 
The hand-designed degree profiles could give way to learnable vectorisations of the bi-persistence module, such as persistence images~\cite{adams2017persistence} or neural persistence layers~\cite{hofer2019learning}, feeding into a deep classifier in place of the kernel methods used here. 
Whether the bi-filtration is more
informative for phylogenetic reconstruction than for classification
remains an open empirical question.

\section*{Acknowledgement}
We used Anthropic's Claude (Opus 4.x) to help draft portions of the Related Work, debug
the implementation, interpret results, and condense the Results and Discussion. All
proofs, code, numerical results, and claims were verified by the authors, who take full
responsibility for the content.


\bibliographystyle{IEEEtran}
\bibliography{refs}

\end{document}